\newcommand{\FPT}{\mathbf{FPT}}
\newcommand{\Wone}{\mathbf{W[1]}}
\newcommand{\fpt}{\textup{fpt}} 
\newcommand{\biclique}{\ensuremath{\textsc{Biclique}}}
\newcommand{\CSP}{\ensuremath{\textsc{CSP}}}
\newcommand{\OCSP}{\ensuremath{\textsc{OCSP}}}
\newcommand{\CCSP}{\ensuremath{\textsc{CCSP}}}
\newcommand{\pclique}{\ensuremath{k\textsc{-Clique}}}
\newcommand{\kbiclique}{\textsc{$k$-Biclique}}
\newcommand{\kclique}{\textsc{$k$-Clique}}
\newcommand{\ETH}{\mathbf{ETH}}
\newcommand{\NP}{\mathbf{NP}}
\newcommand{\PCP}{\textup{PCP}}
\newcommand{\psubiso}{\textsc{$p$-Subgraph-Isomorphism$(\mathbf{C},-)$}}
\newlength{\probwidth}
\newcommand{\pagewidth}{10}
\newcommand{\npprob}[5][8]{
\begin{center}\normalfont\fbox{
\addtolength{\probwidth}{#1cm}\parbox{\probwidth}{\textsc{#2}\\\hspace*{1.5em}
\begin{tabular}[t]{rp{#1cm}}
\textit{Input:}&#3.\\\textit{Parameter:}&#4.\\\textit{Problem:}&#5
\end{tabular}}}
\end{center}}
\newtheorem{theo}{Theorem}[section]
\newtheorem{theorem}[theo]{Theorem}
\newtheorem{lemma}[theo]{Lemma}
\newtheorem{corollary}[theo]{Corollary}
\newtheorem{conjecture}[theo]{Conjecture}
\newtheorem{definition}[theo]{Definition}
\newtheorem{remark}[theo]{Remark}
\begin{document}

\title{The Parameterized Complexity of $k$-Biclique Problem}
\author{Bingkai Lin}

\maketitle

\begin{abstract}
Given a  graph $G$ and an integer $k$, the $\kbiclique$ problem asks whether $G$ contains a complete bipartite subgraph with $k$ vertices on its each side. 
Whether there is an $f(k)\cdot |G|^{O(1)}$-time algorithm  solving $\kbiclique$ for some computable function $f$ has been a longstanding open problem.

We show that $\kbiclique$ is $\Wone$-hard, which implies that such an $f(k)\cdot |G|^{O(1)}$-time algorithm does not exist under the hypothesis $\Wone\neq\FPT$ from  parameterized complexity theory.
To prove this result, we give a reduction which, for every $n$-vertex graph $G$ and  small integer $k$,  constructs a bipartite graph $H=(L\;\dot\cup\; R,E)$ in time polynomial in $n$ such that if $G$ contains a clique with $k$ vertices, then  there are $k(k-1)/2$ vertices in $L$ with $n^{\Theta(1/k)}$ common neighbors, otherwise any $k(k-1)/2$ vertices in $L$ have at most $(k+1)!$ common neighbors.
An additional feature of this reduction is that it creates a gap on the right side of the biclique. Such a gap might have further applications in proving hardness of approximation results.

Assuming a randomized version of Exponential Time Hypothesis, we establish an  $f(k)\cdot |G|^{o(\sqrt{k})}$-time lower bound for $\kbiclique$ for any computable function $f$. 
Combining our result with the work of~\cite{BulatovM11}, we obtain a dichotomy classification of the parameterized complexity of cardinality constraint satisfaction problems.
\end{abstract}

%
%

%


\section{Introduction}
Given an  $n$-vertex graph $G$ and an integer $k$,  the goal of $\kbiclique$  problem is to decide whether  $G$ contains a subgraph (not  necessarily induced) isomorphic to the balanced complete bipartite graph $K_{k,k}$. This problem is known to be $\NP$-hard \cite{johnson1987np}. Whether there exists an $f(k)\cdot n^{O(1)}$-time algorithm solving $\kbiclique$ for some computable function $f$ has received serious attention from the parameterized complexity community \cite{FlumGrohe06,Grohe07,BulatovM11}.
It is the first problem on the ``most infamous'' list (page 677) in a recent text book \cite{fundpara2013}:
\begin{quote}
``Almost everyone considers that this problem should obviously be $\Wone$-hard, and... it is rather an embarrassment to the field that the question remains open after all these years!''
\end{quote}
In this article, we confirm that $\kbiclique$ is $\Wone$-hard parameterized by $k$. Hence assuming $\Wone\neq\FPT$, a hypothesis from parameterized complexity theory  analogous to $\textup{NP}\neq \textup{P}$, it has no $f(k)\cdot n^{O(1)}$-time algorithms for any computable function $f$. As a byproduct, we also obtain a result on hardness of approximation in the parameterized setting.
We will explain this using the language of set intersection problem.
Given a collection $\mathcal{F}$ of subsets of $[n]:=\{1,2,\ldots,n\}$, the goal of \textsc{Maximum-$k$-Subset-Intersection} is to select $k$ distinct sets from $\mathcal F$ such that the size of their intersection  is as large as possible. 
 It is not difficult to see that $\kbiclique$ restricted to bipartite graphs can be interpreted as finding $k$ distinct vertices from one side of the bipartite graph
such that the intersection size of their neighbor sets is  at least $k$. 
Our result yields that one can construct a set family $\mathcal{F}$ and an integer $s$ for every graph $G$ and  integer $k$ in  time polynomial in $|G|$ such that $s=\binom{k}{2}$ and
\begin{description}
\item[(F1)] if $G$ contains a clique with $k$ vertices, then  there are  $s$ sets in $\mathcal F$ with intersection  size no less than $n^{\Theta(1/{k})}$,
\item[(F2)] if $G$ contains no clique with $k$ vertices, then any $s$ distinct sets from $\mathcal F$ have intersection size  at most $(k+1)!$.
\end{description}
We say an algorithm approximates \textsc{Maximum-$k$-Subset-Intersection} to a ratio $r\ge 1$ if it outputs $k$ sets from $\mathcal F$ whose intersection size is at least $1/r$ times the optimum one.
Combined with the $f(k)\cdot n^{o(k)}$-time lower bound for $\kclique$~\cite{chen2004linear} under the Exponential Time Hypothesis~\cite{impagliazzo1998problems}, our reduction implies that, assuming  $n$-variable \textup{SAT} has no $2^{o(n)}$-time algorithms,  there are no $f(k)\cdot n^{o(\sqrt{k})}$-time algorithms that can approximate the  \textsc{Maximum-$k$-Subset-Intersection} to ratio $n^{o(1/\sqrt{k})}$. On the other hand, assuming that \textup{SAT} cannot be solved by probabilistic algorithms in time $2^{n^\epsilon}$, the polynomial time inapproximability of \textsc{Maximum-$k$-Subset-Intersection} within ratio $n^{\epsilon'}$ ($\epsilon'$ depends on $\epsilon$) has been established in \cite{xavier2012note} basing on the inapproximability of \textsc{Maximum-Edge-Biclique} \cite{ambuhl2011inapproximability}. Note that our result refutes the existence of $f(k)\cdot n^{o(\sqrt{k})}$-time algorithms for any computable function $f$, while the work in~\cite{xavier2012note} only rules out the existence of polynomial-time algorithms.

The results in~\cite{ambuhl2011inapproximability} use the Quasi-random \PCP\ construction of ~\cite{khot2006ruling}. It is worth pointing out that the $\PCP$-machinery uses reduction which acts globally~\cite{arora1998approximability}. In contrast, our reduction performs local transformations, i.e., each bit of the output depends on at most constant bits of the input. The drawback of our inapproximability result is  that the gap we establish here is not so robust. In the case (F2), there may exist $s-1$ distinct sets in $\mathcal F$ with intersection size $n^{\Theta(1/{k})}$. In other words, we do not prove the hardness approximation of \textsc{Maximum-Balanced-Biclique} or \textsc{Maximum-Edge-Biclique}, whose inapproximability has been considered as major open problems in complexity theory (see~\cite{feige2002relations,feige2004hardness,khot2006ruling,ambuhl2011inapproximability}).

The main idea of our reduction is to exploit the gap between the sizes of the common neighbors of $k$-vertex sets and $(k+1)$-vertex sets in some Paley-type graphs defined in~\cite{BabaiGKRSW96}. Here we give a high level overview of the underlying idea  of our reduction using the language of  set intersection. First, suppose we can construct a set family $\mathcal{T}=\{S_1,S_2,\ldots,S_{n}\}$ of subsets of $[n]$ for some integers $k$, $n$ and $h>\ell$ (e.g. $h={n}^{1/k}$ and $\ell=(k+1)!$) such that:
\begin{description}
\item[(T1)] any $k+1$ distinct subsets in $\mathcal T$ have intersection size at most $\ell$;
\item[(T2')] any $k$ distinct subsets in $\mathcal T$ have intersection size at least $h$.\footnote{We will define later a property (T2) as a replacement for (T2').}
\end{description}
Then for every graph $G$ with $V(G)\subseteq [n]$, we  construct  our target set family  $\mathcal F$ by setting $\mathcal F:=\{S_{\{i, j\}} : \text{for all $\{i,j\}\in E(G)$}\}$, where $S_{\{i, j\}}:=S_i\cap S_j$. Let $s:=k(k-1)/2$. It is easy to check that if $G$ has a $k$-vertex clique, say $\{a_1,a_2,\ldots,a_k\}$ is a clique in $G$, then  (T2') implies that $|\bigcap_{i\in[k]}S_{a_i}|\ge h$. It follows that   $\{S_{\{a_i, a_j\}} : \text{for all $\{i,j\}\in\binom{[k]}{2}$}\}$ are $s$ distinct subsets in $\mathcal F$ with intersection size at least $h$. On the other hand, if $G$ contains no $k$-vertex clique, then any $s$ distinct sets in $\mathcal F$ must come from at least $k+1$ distinct sets in $\mathcal T$, by (T1) these sets have intersection size at most $\ell$.

To complete our reduction, it remains to construct the set family $\mathcal T$ efficiently for some appropriate parameters $k$, $n$, $h$ and $\ell$. However, at the moment of writing, I do not know how to do that even probabilistically. Therefore, we relax (T2') by  partitioning $[n]$ into $|V(G)|$ disjoint subsets $I_1,I_2,\ldots,I_{
|V(G)|}$ and replacing (T2') with (T2). 
\begin{description}
\item[(T2)] for any $k$ distinct vertices $a_1,a_2,\ldots,a_k\in  V(G)$, there exist $b_1\in I_{a_1},b_2\in I_{a_2},\ldots,b_k\in I_{a_k}$ such that $|\bigcap_{i\in[k]}S_{b_i}|\ge h$.
\end{description}
With a little more effort, we adapt our reduction to  set families satisfying (T1) and (T2). The  most technical part of this article is to construct  set families satisfying these two conditions for   $\ell=\Theta((k+1)!)<h={n}^{\Theta(1/k)}$. 
We also provide a probabilistic construction for  $\ell=\Theta(k^2)<h={n}^{\Theta(1/k)}$, which allows us to derive tighter lower bound for $\kbiclique$.

\subsection{Main Results}
In the theorems and corollaries of this section, $f$ can be any computable function.
\begin{theorem}[Main]\label{thm:gapreduction}
 For every  $n$-vertex graph $G$ and  positive integer $k$ 
there is a polynomial time algorithm which outputs a
 bipartite graph $H=(L\;\dot\cup\;R,E)$ and an integer $s=\Theta(k^2)$ such that:
\begin{description}
\item[(Completeness)] if  $G$ contains a clique with $k$ vertices, then there are $s$ vertices in $L$ with at least $n^{\Theta(\frac{1}{k})}$ common neighbors in $R$;
\item[(Soundness)] if  $G$ contains no clique with $k$ vertices, then any $s$ vertices in $L$ have at most $(k+6)!$ common neighbors in $R$.
\end{description}
\end{theorem}

From Theorem~\ref{thm:gapreduction}, we  obtain an inapproximation result for  \textsc{Maximum-$k$-Subset-Intersection} immediately.

\begin{corollary}
Assuming $\FPT\neq\Wone$, there is no $f(k)\cdot n^{O(1)}$-time algorithm approximating \textsc{Maximum-$k$-Subset-Intersection}  within  within $n^{o(\frac{1}{\sqrt{k}})}$-approximation ratio.
\end{corollary}

To see that Theorem \ref{thm:gapreduction} implies the $\Wone$-hardness of $\kbiclique$. Let $t=(k+6)!+1$. We add $(t-s)$ vertices to $H$ and make them adjacent to every vertex in $R$. It is easy to check that the resulting graph contains a $K_{t,t}$ if and only if the original graph $G$ contains a $K_k$.
\begin{corollary}
$\kbiclique$ is $\Wone$-hard.
\end{corollary}

\subsubsection{Hardness Results under $\ETH$}
More refined lower bounds can be obtained if we take a stronger assumption made by
Impagliazzo, Paturi and Zane \cite{impagliazzo1998problems,impagliazzo2001complexity}. 
\begin{conjecture}[Exponential Time Hypothesis (\textup{ETH})]
$3$-\textup{SAT} cannot be solved in time $2^{o(n)}$, where $n$ is the number of variables.
\end{conjecture}
\begin{theorem}[\cite{chen2004linear}]\label{thm:ETHkclique}
Assumming $\ETH$ there is no $f(k)\cdot n^{o(k)}$-time algorithm for $\kclique$.
\end{theorem}

From Theorem \ref{thm:ETHkclique} and Theorem \ref{thm:gapreduction}, we can deduce
\begin{corollary}
Assuming $\ETH$, there is no $f(k)\cdot n^{o(\sqrt{k})}$-time algorithm approximating \textsc{Maximum-$k$-Subset-Intersection}  within $n^{o(\frac{1}{\sqrt{k}})}$-approximation ratio.
\end{corollary}

An immediate open question is  whether there exist $f(k)\cdot n^{o(k)}$-time algorithms for $\kbiclique$. To rule out such algorithms, we need to find a \emph{linear} $\fpt$-reduction from $\kclique$ to $\kbiclique$, i.e., given $G$ and $k$, constructing a new graph $G'$ in $f(k)\cdot n^{O(1)}$ time such that $K_k\subseteq G$ if and only if $K_{k',k'}\subseteq G'$, where $k'=ck$ for some constant $c$. The existence of such a reduction would imply that $\kbiclique$ has no $f(k)\cdot n^{o(k)}$-time algorithm under  $\ETH$. However, since our reduction causes a quadratic blow-up of the size of solution, $k'=\binom{k}{2}$ is the best we can achieve so far. We note that by Theorem \ref{thm:gapreduction}, we can get $k'=\Omega(k!)$. Nevertheless, using the probabilistic method, we have:

\begin{theorem}\label{thm:probbicw1}
For every $n$-vertex graph $G$ and  positive integers $k$, $\ell$ and $h$ with $n\ge \max\{4(k+1)^2,20\}$, $\ell=2k^2+4k-1<h\le n^{\frac{1}{4(k+1)}}$, one can construct a random graph $H=(L\;\dot\cup \;R,E)$ in time polynomial in $n$  such that, with  probability at least $\frac{9}{10}$, 
\begin{description}
\item[(Completeness)]  if $G$ contains a $k$-clique, then there exists $\binom{k}{2}$ vertices in $L$ having $h$ common neighbors,
\item[(Soundness)] if $G$ contains no $k$-clique, then every $\binom{k}{2}$-vertex in $L$ has at most $\ell$ common neighbors.
\end{description}
\end{theorem}

Consider a randomized version of $\ETH$ which states that there are no randomized algorithms with two-sided error such that
for every input instance of $3$-\textup{SAT} decide if it is satisfiable or not  correctly with  probability larger than $1/2$ in $2^{o(n)}$ time. For more detail we refer to  \cite{ParameterizedAlgorithm}. With the randomized $\ETH$ Theorem~\ref{thm:probbicw1} yields a better lower bound for $\kbiclique$:
\begin{corollary} Under the randomized $\ETH$,
there is no $f(k)\cdot n^{o(\sqrt{k})}$-time algorithm to  decide whether a given graph contains a subgraph isomorphic to $K_{k,k}$.
\end{corollary}

\subsection{Related Topics}

\paragraph{Cardinality $\CSP$} $\kbiclique$  can be formulated under the framework of constraint satisfaction problem. Fix a domain  $D$, an instance of the constraint satisfaction problem ($\CSP$) is a pair $I=(V,C)$, where $V$ is a set of variables and $C$ is a set of constraints.  Each constraint of $C$ can be written as $\langle \textbf{v},R\rangle$, where $R$ is an $r$-ary relation on $D$ for some positive integer $r$ and $\textbf{v}=(v_1,v_2,\ldots, v_r)$ is an $r$-tuple of variables. An assignment $\tau: V\to D$ satisfies a constraint $\langle \textbf{v},R\rangle$ if and only if $(\tau(v_1),\tau(v_2),\ldots,\tau(v_r))\in R$. The goal is to find an assignment $\tau: V\to D$ satisfying all the constraints in $C$. In many applications of $\CSP$, we usually fix a set $\Gamma$ of relations, and denote by $\CSP(\Gamma)$ the $\CSP$ problem in which all  the relations in the constraints are from $\Gamma$.

In \cite{BulatovM11}, Andrei A. Bulatov and D{\'a}niel Marx introduced two parameterized versions of $\CSP$. More specifically, they assume that the domain contains a special value $0$, which is ``free'', and other non-zero values, which are ``expensive''. The goal is to find an assignment with bounded number of variables assigning expensive values. One way to achieve this goal is to take the number of nonzero values used in an assignment as parameter, which leads to the definition of the $\CSP$ with size constraints ($\OCSP$); another more refined way is to prescribe how many variables have to be assigned each particular nonzero value, this leads to the definition of $\CSP$ with cardinality constraints. They provide a complete characterization of the fixed-parameter  tractable cases of $\OCSP(\Gamma)$ and show that all the remaining problems are $\Wone$-hard.

For $\CSP$ with cardinality constraints, the situation was more complicated. A simple observation shows that  $\kbiclique$  can be expressed as a $\CCSP$ instance. Without loss of generality, consider $\kbiclique$ on bipartite graphs. Let $D:=\{0,1,2\}$. For any bipartite graph $G=(A\;\dot\cup\;B,E)$, we construct a $\CCSP$ instance with $V=A\;\dot\cup\;B$ and 
\[
C=\{\langle (v_1,v_2),R\rangle : \text{for all $v_1\in A$ and $v_2\in B$ with  $v_1v_2\in E$ and $R=\{(0,0),(1,0),(0,2)\}$}\},
\]
then we ask for an assignment $\tau: V\to D$ with $k$ variables assigning $1$ and $k$ variables assigning $2$. It is easy to check that for any bipartite graph $G$,  the corresponding $\CCSP$ instance has such an assignment if and only if the \emph{bipartite complement}\footnote{The bipartite complement of $G=(A\;\dot\cup\; B,E)$ is defined by $\bar{G}:=(A\;\dot\cup\; B,\bar{E})$, where $\bar{E}:=\{\{v,u\} : v\in A,u\in B,\{v,u\}\notin E\}$.} $\bar{G}$ of $G$ contains a $K_{k,k}$. Therefore, without settling  the parameterized complexity of $\kbiclique$, they can only show that $\CCSP(\Gamma)$ is  fixed-parameter tractable, $\biclique$-hard or $\Wone$-hard. Combining our result and Theorem 1.2 in  \cite{BulatovM11}, we finally obtain a dichotomy theorem for the parameterized complexity of $\CCSP(\Gamma)$:
\begin{theorem}
For every finite $\Gamma$ closed under substitution of constants, $\CCSP(\Gamma)$ is either $\FPT$ or $\Wone$-hard.
\end{theorem}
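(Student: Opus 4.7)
The plan is to invoke the trichotomy of Bulatov and Marx as a black box and then collapse one of its three cases using Theorem~\ref{thm:bicliquew1hard}. More concretely, Theorem 1.2 of \cite{DBLP:conf/icalp/BulatovM11} asserts that for every finite relation set $\Gamma$ closed under substitution of constants, $\CCSP(\Gamma)$ falls into exactly one of three classes: (i) it is $\FPT$; (ii) it is \emph{$\biclique$-hard}, meaning that $\kbiclique$ admits an $\fpt$-reduction to $\CCSP(\Gamma)$; or (iii) it is $\Wone$-hard. So the entire proof consists of showing that case (ii) is already subsumed by case (iii) in light of our new hardness result.

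First I would state the Bulatov--Marx trichotomy verbatim and fix the precise notion of ``$\biclique$-hardness'' used there, namely the existence of an $\fpt$-reduction $\kbiclique \fptle \CCSP(\Gamma)$. Next I would recall Theorem~\ref{thm:bicliquew1hard}, whose corollary is $\kclique \fptle \kbiclique$; since $\kclique$ is the canonical $\Wone$-complete problem, this gives that $\kbiclique$ is itself $\Wone$-hard. Finally, I would invoke the transitivity of $\fpt$-reductions: composing $\kclique \fptle \kbiclique$ with $\kbiclique \fptle \CCSP(\Gamma)$ yields $\kclique \fptle \CCSP(\Gamma)$, so any $\biclique$-hard $\CCSP(\Gamma)$ is automatically $\Wone$-hard. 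Consequently, cases (ii) and (iii) merge and only the dichotomy ``$\FPT$ or $\Wone$-hard'' survives.

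There is essentially no combinatorial work left to do here; the only points that require a bit of care are bookkeeping rather than mathematics. The first is making sure that the reduction produced by Theorem~\ref{thm:bicliquew1hard}, whose target is a $K_{k',k'}$ with $k' = \Theta(k!)$ rather than $k'=k$, still qualifies as an $\fpt$-reduction: the parameter blow-up from $k$ to $k'$ is a computable function of $k$, the running time $O(n^{18})$ is polynomial in $n$, and the size condition $n^{6/(k+6)}>(k+6)!$ can always be met after a routine padding of the input graph, so all formal requirements of an $\fpt$-reduction are satisfied. The second is verifying that the notion of $\biclique$-hardness in \cite{DBLP:conf/icalp/BulatovM11} is indeed phrased in terms of $\fpt$-reductions (and not some finer reducibility), so that the composition above goes through without modification.

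The only potential obstacle, and it is a mild one, is thus to confirm the precise formulation used by Bulatov and Marx and to rule out any subtle parameter-preservation requirement on the intermediate reduction; once that is done, the dichotomy is an immediate corollary of the trichotomy and of $\kbiclique$'s $\Wone$-hardness.
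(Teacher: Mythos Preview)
Your proposal is correct and matches the paper's own approach exactly: the paper does not give a standalone proof but simply states that the dichotomy follows by ``combining our result and Theorem~1.2 in \cite{DBLP:conf/icalp/BulatovM11}'', which is precisely the collapse of the Bulatov--Marx trichotomy via the $\Wone$-hardness of $\kbiclique$ that you spell out. Your additional bookkeeping remarks (parameter blow-up being computable, padding to meet the size condition) are not addressed explicitly in the paper but are indeed the right sanity checks.
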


\paragraph{Subgraph Isomorphism}
$\kbiclique$ is a special case of \textsc{Subgraph-Isomorphism}, in which we are given two graphs $G$ and $H$ and the goal is to decide if $G$ is a subgraph of $H$. Using the color-coding technique in \cite{AlonYZ95},  $\textsc{Subgraph-Isomorphism}$ can be solved in $2^{O(|G|)}\cdot|H|^{O(tw(G))}$ time, where $tw(G)$ denotes the tree-width of $G$. (For the definition of tree-width, see~\cite{robertson1986graph,kloks1994treewidth,bodlaender1994tourist}.) For any class $\mathbf{C}$ of graphs, define the following parameterized problem.
\npprob[7]{\psubiso}{A graph $G\in \mathbf{C}$ and an arbitrary graph $H$}{$|G|$}{Decide whether $G$ is a subgraph of $H$.}
It follows that if $\mathbf{C}$ is a class of graphs with tree-width bounded by some constant, then \psubiso\ is fixed-parameter tractable, and this is believed to be optimal.
In \cite{Grohe07}, Grohe conjectured  that \psubiso\ is $\Wone$-hard if and only if $\mathbf{C}$ has unbounded tree-width. Under the assumption of $\FPT\neq \Wone$, this would imply that there is no $f(k)\cdot |H|^{O(1)}$-time algorithm to decide whether $H$ contains a subgraph isomorphic to $K_{k,k}$, because the class of balanced complete bipartite graphs $\{K_{k,k} : k\in\mathbb{N}\}$ has unbounded tree-width. In other words, we can not prove Grohe's conjecture without answering the parameterized complexity of $\kbiclique$. Although $\kbiclique$ is believed to be $\Wone$-hard, despite many attempts  \cite{marx2007can,binkele2010exact,couturier2012bicolored,AtminasLR12,gaspers2012independent,kutzkov2012exact}, no $\fpt$-reduction from $\kclique$ to $\kbiclique$ has previously been found. So proving the $\Wone$-hardness of $\kbiclique$ is one step towards a dichotomy classification of \psubiso.

\bigskip
\noindent \textbf{Organization of the Paper.} In Section~\ref{sec:preliminary} we set up notations and terminologies.
The main idea of the reduction is presented in Section~\ref{sec:reduction} after introducing the threshold property. To complete the reduction, we  provide two efficient constructions of  bipartite graphs satisfying the threshold property. The  explicit construction is given in Section~\ref{sec:paley-type}, while  a probabilistic construction  can be found in Section~\ref{sec:probaconst}. Finally, we conclude and raise some open questions in Section~\ref{sec:conclusion}.

\section{Preliminaries}\label{sec:preliminary}
We use $\mathbb N$, $\mathbb N^+$, $\mathbb{R}$ and  $\mathbb C$ to denote the sets of nonnegative integers, positive integers, real numbers and complex numbers respectively.
For any number $n\in\mathbb{N}^+$, let $[n] := \{1,2, \ldots, n \}$. For $d,n\in\mathbb{N}^+$, we write $d \mid n$ if and only if $d$ divides $n$. For any real numbers $a,b$, we use the notation $a\pm b$ to denote the numbers between $a-b$ and $a+b$. 
 For every set $S$ we use $|S|$ to denote its size. Moreover, for any $t\in\mathbb{N}^+$, we let
$\binom{S}{t}$ be the set of all $t$-element subsets of $S$.  We use $A\;\dot\cup\; B$ to denote the  union of two disjoint sets $A$ and $B$.

\subsection{Parameterized Complexity}
We denote the alphabet $\{0,1\}$ by $\Sigma$ and identify problems $Q$
with subsets  of $\Sigma^*$.
A \emph{parameterized problem} is a pair $(Q,\kappa)$ consisting of
a classical problem $Q \subseteq \Sigma^*$ and a polynomial time
computable {\em parameterization} $\kappa : \Sigma^* \to \mathbb N$.
For example, the parameterized clique problem is defined in the form:
\npprob[\pagewidth]{\pclique}{A  graph $G$ and a positive integer $k$}{$k$}{Does $G$ contains a subgraph isomorphic to $K_{k}$?}

An algorithm $\mathbb A$ is an \emph{$\fpt$-algorithm with respect
to a parameterization $\kappa$ } if for every $x\in \Sigma^*$ the
running time of $\mathbb A$ on $x$ is bounded by $f(\kappa(x))\cdot
|x|^{O(1)}$ for a computable function $f:\mathbb N\to \mathbb N$.
A parameterized problem is fixed-parameter tractable ($\FPT$ for short) if it has an $\fpt$-algorithm.

Let $(Q,\kappa)$ and $(Q',\kappa')$ be two parameterized problems.
An \emph{\fpt-reduction} from $(Q,\kappa)$ to $(Q',\kappa')$ is a
mapping $R:\Sigma^*\to \Sigma^*$ such that:
\begin{description}

\item For every $x\in \Sigma^*$ we have $x\in Q$ if and only if
    $R(x)\in Q'$;

\item $R$ is computable by an \fpt-algorithm with respect to $k$;

\item There is a computable function $g:\mathbb N\to \mathbb N$
    such that $\kappa'(R(x)) \le g(\kappa(x))$ for all $x\in
    \Sigma^*$.

\end{description}
If there is an \fpt-reduction from $(Q,\kappa)$ to $(Q',\kappa')$ and  $(Q',\kappa')$ is $\FPT$,  then $(Q,\kappa)$ is also $\FPT$. On the other hand, 
if there is an \fpt-reduction from $\pclique$ to $(Q',\kappa')$,  then we say $(Q',\kappa')$ is  $\Wone$-hard  (for the definition of $\Wone$-hardness, see \cite{dowfel,FlumGrohe06}). A standard assumption from parameterized complexity theory is $\FPT\neq\Wone$, which is equivalent to stating that $\pclique$ has no $f(k)\cdot |G|^{O(1)}$-time algorithm for any computable function $f$.

\subsection{Graphs}
Every graph $G= (V,E)$ is  determined by a nonempty vertex set $V$ and an edge set $E\subseteq \binom{V}{2}$. We also use $V(G)$ and $E(G)$ to denote the vertex set and edge set of $G$. Sometimes, an edge $\{u,v\}$  is written as $uv$. Let $|G|:=|V(G)|+|E(G)|$ be the size of  $G$.  Every nonempty subset $S\subseteq V(G)$ induces a subgraph $G[S]$ with the vertex set $S$ and the edge set $E(G[S]):= \binom{S}{2}\cap E(G)$. 
A graph $G$ is a subgraph of another graph $H$, denoted by $G\subseteq H$, if there exists an injection $\phi : V(G)\to V(H)$ such that for all $u,v\in V(G)$, $\{u,v\}\in E(G)$ implies that $\{\phi(u),\phi(v)\}\in E(H)$. We call $G$  a \emph{clique}  if for every distinct $u,v \in V(G)$ we have $\{u,v\}\in E(G)$. 
A $k$-clique is a clique with exactly $k$ vertices, which sometimes we also call a $K_k$ to simplify presentation. 
We call $G$ \emph{bipartite} if $V(G)$ admits a partition into two classes such that every edge has its ends in different classes.  A \emph{complete bipartite graph} or \emph{biclique} is a bipartite graph such that every two vertices from different partition classes are adjacent. We use  \emph{$K_{s,t}$} to denote the complete bipartite graph with $s$ vertices on one side and $t$ vertices on the other side. In the bipartite graph $G=(A\;\dot\cup \;B,E)$, for $V\subseteq A$, let $\Gamma^G(V):=\{u\in B : \text{for all $v\in V, \{v,u\}\in E$}\}$. Similarly, for any $t$-tuple $\textbf{v}=(v_1,v_2,\ldots,v_t)$ of $A$, $\Gamma^G(\textbf{v}):=\{u\in B : \text{for all $i\in[t], \{v_i,u\}\in E$}\}$. If it is clear from the context, we omit the superscript $G$ in the above notations and write $\Gamma(V)$ and $\Gamma(\textbf{v})$ instead.
\subsection{Probability}
In this paper we consider probability spaces $(\Omega,\Pr)$, where $\Omega$ is a finite set and $\Pr$ is a nonnegative function from $\Omega$ to $[0,1]$ such that $\sum_{\omega\in\Omega}\Pr(\omega)=1$. 

A \emph{random variable} over $(\Omega,\Pr)$ is a function $X : \Omega\to\mathbb{R}$.
The \emph{expectation} and \emph{variance}
of a random variable are defined as follows.
\begin{description}

\item[Expectation] $\mathrm{E}[X]:=\sum_{\omega\in\Omega}X(\omega)\cdot\Pr(\omega)$.
\item[Variance] $\mathrm{Var}[X]:=\mathrm{E}[X^2]-\mathrm{E}[X]^2$.

\end{description}
An \emph{event} $E$ can be treated as a subset of $\Omega$. The probability of $E$ is defined as $\Pr(E):=\sum_{\omega\in E}\Pr(\omega)$.
It is routine to define an event from a random variable. For example, given a random variable $X$, the event ``$X>0$'' can be regarded as a set $E:=\{\omega\in\Omega : X(\omega)>0\}$. Hence $\Pr(X>0)=\sum_{\omega\in\Omega, X(\omega)>0}\Pr(\omega)$.

To give upper bounds for probabilities in the forms $\Pr(X>\alpha)$ and $\Pr(X<\alpha)$ for random variable $X$ and $\alpha\in\mathbb{R}$, we need the following tools.
\begin{theorem}[Markov's Inequality]
Let $X\ge 0$ be a random variable and $\alpha>0$, then 
\begin{equation*}
\Pr(X\ge\alpha)\le \frac{\mathrm{E}[X]}{\alpha}.
\end{equation*}

\end{theorem}

\begin{theorem}[Chebyshev's Inequality]
For any real $\lambda>0$,
\begin{equation*}
\Pr(|X-\mathrm{E}[X]|\ge\lambda)\le \frac{\mathrm{Var}[X]}{\lambda^2}.
\end{equation*}
\end{theorem}

\subsection{Finite Fields}
We collect some useful facts on finite fields here. For more detail, the reader is referred to~\cite{lidl1997finite}. For every prime number $p$ the set $\{0,1,2,\ldots,p-1\}$ with addition and multiplication modulo $p$  is a  finite field, which we denote by $\mathbb{F}_p$.
A polynomial over a field $\mathbb{F}$ is irreducible  if it can not be expressed as the product of two non-constant polynomials over $\mathbb{F}$  with lower degree.
For every positive integer $t$, there always exists 
 an irreducible polynomial  $f$  with degree $t$ over the field $\mathbb{F}_p$. 
The  finite field $\mathbb{F}_{q}$ with $q=p^t$ elements can be represented as the set of polynomials over $\mathbb{F}_p$ with degree at most $t-1$. The addition and multiplication over $\mathbb{F}_{q}$ is performed modulo $f$. 
Let $\mathbb{F}_q^\times:=\mathbb{F}_q\setminus\{0\}$ be  the multiplicative group of $\mathbb{F}_q$. There always exists a generator $g\in \mathbb{F}_q^\times$ such that  $\mathbb{F}_q^\times=\{g^i : i\in[q-1]\}$.
The algebraic closure of any finite field $\mathbb{F}_{p^t}$ ($t\ge 1$) is the union $\bigcup_{i\ge 1}\mathbb{F}_{p^i}$. 
For $a\in\mathbb{N}$ and $e\in\mathbb{F}_{p^t}$, we denote by $a\cdot e$ the sum of $a$ copies of $e$. 
\begin{lemma}\label{lem:orderp}
For  $e\in\mathbb{F}_{p^t}$ and $n$ divisible by $p$,
$n\cdot e=0$.
\end{lemma}
Suppose $f(x)=a_0+a_1x+\cdots+a_nx^n$ is a polynomial over a field $\mathbb{F}$. The derivative $f'$ of $f$ is defined by $f'(x)=a_1+2\cdot a_2x+\cdots+n\cdot a_nx^{n-1}$.
\begin{theorem}\label{thm:multiplicity}
The element $b\in\mathbb{F}$ is a multiple root of $f$ if and only if it is a root of both $f$ and $f'$.
\end{theorem}

We use the following bounds in~\cite{shparlinski2013finite} for computations on finite fields. Let $L(n)=\log n\log\log n$.
\begin{theorem}\label{thm:A}
Let $\mathbb{F}$ be a field, $f$ and $g$ be two polynomials over $\mathbb{F}$ of degree at most $n$. Then $fg$ and the remainder of $f$ divided by $g$ can be computed using $O(nL(n))$ arithmetic operations in $\mathbb{F}$.
\end{theorem}
\begin{theorem}\label{thm:B}
For every prime $p$, addition in $\mathbb{F}_p$ can be performed using $O(\log p)$ bit operations; multiplication can be performed using $O(\log p L(\log p))$ bit operations and division can be performed using $O(L(p)L(\log p))$ bit operations.
\end{theorem}

To construct $\mathbb{F}_{p^t}$ we need to find  an  irreducible polynomial $f$ over $\mathbb{F}_p$ with degree $t$. In \cite{adleman1986finding}, the authors give a deterministic algorithm that outputs an irreducible polynomial with degree $t$ in  $(t\cdot\log p)^{O(1)}$ steps, assuming extended Riemann hypothesis. To remove the need for extended Riemann hypothesis in this algorithm is still an open question. However, in this paper we are allowed to use reductions running in time $q^{O(1)}$. A brute-force search algorithm will do the job. 
We enumerate every polynomial with degree  $t$ and test if it is divisible by any other non-constant polynomial with lower degree.
By Theorem~\ref{thm:A} and Theorem~\ref{thm:B}, the running time of this brute-force search algorithm can be upper bounded by $p^{O(t)}$.



\section{Reduction}\label{sec:reduction}
We start with a definition of graph property that is key to our reduction. We then present the reduction assuming the existence of graphs satisfying this property. We prove our main result via a deterministic construction of such graphs, and the $\ETH$-hardness result via a randomized construction.

\begin{definition}[$(n,k,\ell,h)$-threshold property]\label{def:thresholdproperty}
For $n,k,h,\ell\in\mathbb{N}^+$ with $h>\ell$, a bipartite graph $T=(A\;\dot\cup\; B,E)$ with a partition $A=V_1\;\dot\cup \;V_2\;\dot\cup\cdots\dot\cup\; V_n$  satisfies the $(n,k,\ell,h)$-threshold property if the following two conditions hold.
\begin{description}
\item[(T1)] Any $k+1$ distinct vertices in $A$ have at most $\ell$ common neighbors in $B$, i.e.
\[\forall V\in\binom{A}{k+1},|\Gamma(V)|\le \ell.\]
\item[(T2)] For all $k$ distinct indices $\{i_1,i_2,\ldots,i_k\}\in\binom{[n]}{k}$, there exist $v_{i_1}\in V_{i_1},v_{i_2}\in V_{i_2},\ldots,v_{i_k}\in V_{i_k}$ such that $v_{i_1},v_{i_2},\ldots,v_{i_k}$ have at least $h$ common neighbors in $B$, i.e.
\[\forall \{i_1,i_2,\ldots,i_k\}\in\binom{[n]}{k},\exists \textbf{v}\in V_{i_1}\times V_{i_2}\times \cdots\times V_{i_k},|\Gamma(\textbf{v})|\ge h.\]
\end{description}
\end{definition}

\begin{remark}
The definition of threshold property  was inspired by the work from a remarkable paper \cite{BabaiGKRSW96}, in which the authors gave an explicit construction  of a class of graphs satisfying (T1) and
\begin{description}
\item[($\star$)] At least a $\frac{1}{2\ell-1}$ fraction of the sets $V\in\binom{A}{k}$ have $h$ common neighbors,
\end{description}
for $\ell=(k+1)!$ and $h=n^{\Theta(1/k)}$. (see Theorem 3.6 and Lemma 3.7 of \cite{BabaiGKRSW96}) We replace the property ($\star$) by (T2) because in the reduction we need to ensure  that for every  $k$-vertex set of $G$, there exists a corresponding $k$-vertex set with $h$ common neighbors in $T$.
\end{remark}

\begin{lemma}[reduction]\label{lem:thresholdimplyhard}
Suppose $k,n\in\mathbb{N}^+$. Let $s:=\binom{k}{2}$. For every $n$-vertex simple graph $G$ and a bipartite graph $T$ satisfying the $(n,k,\ell,h)$-threshold property, one can construct a new graph $H=(L\;\dot\cup \;R,E)$ in $O(|V(T)|^2\cdot |G|)$ time, such that:
\begin{description}
\item[(H1)] if $G$ contains a clique with $k$ vertices, then there exists $V\in\binom{L}{s}$,  $|\Gamma^H(V)|\ge h$;
\item[(H2)] if $G$ contains no clique with $k$ vertices, then for all $V\in\binom{L}{s}$,  $|\Gamma^H(V)|\le \ell$.
\end{description}
\end{lemma}
\begin{figure}
\begin{tikzpicture}[scale=0.94]

\tikzstyle{mynodestyle} = [scale=0.3,draw=black,color=black,circle]

\node at (-5,-2.5) {$G$};

\node (v29)[mynodestyle] at (-5,4) {};
\node (v30)[mynodestyle] at (-5.5,3.5) {};
\node (v31)[mynodestyle]  at (-5,3) {};
\node  at (-5.2,4) {$a$};
\node  at (-5.7,3.5) {$b$};
\node  at (-4.8,3) {$c$};

\node (v32)[mynodestyle] at (-5,0.5) {};
\node (v33)[mynodestyle]  at (-5.5,0) {};
\node (v35)[mynodestyle]  at (-5,-0.5) {};
\node (v34)[mynodestyle]  at (-4.5,0) {};
\node  at (-5.2,0.5) {$a$};
\node  at (-5.7,0) {$b$};
\node  at (-4.8,-0.5) {$c$};
\node  at (-4.3,0) {$d$};

\draw [thick] (v29) edge (v30);
\draw [thick] (v30) edge (v31);
\draw [thick] (v29) edge (v31);
\draw [thick] (v32) edge (v33);
\draw [thick] (v34) edge (v35);
\draw [thick] (v32) edge (v35);

\node at (-0.5,-2.5) {$T$};
\node at (-1.7,-2) {$A$};

\node at (0.7,-2) {$B$};
\node at (-6.5,3.5) {(yes)};
\node at (-6.5,0) {(no)};
\draw[thick]  (-5,3.5) ellipse (0.8 and 1.5);
\draw[thick]  (-5,0) ellipse (0.8 and 1.5);

\draw[thick]  (-2.5,5) rectangle (-1,2);
\draw[thick]  (0,5) rectangle (1.5,2);

\draw[thick]  (-2.5,1.5) rectangle (-1,-1.5);
\draw[thick]  (0,1.5) rectangle (1.5,-1.5);

\node (v5) at (-2.5,4.5) {};
\node (v6) at (-1,4.5) {};
\node (v3) at (-2.5,4) {};
\node (v4) at (-1,4) {};

\node (v1) at (-2.5,3.5) {};
\node (v2) at (-1,3.5) {};
\draw [dashed] (v1) edge (v2);
\draw [dashed] (v3) edge (v4);
\draw [dashed] (v5) edge (v6);

\draw[dotted]  (-1.9,3.7) ellipse (0.4 and 0.8);

\node (va) at (-1.9,4.28) [mynodestyle] {};
\node (vva) at (-2.2,4.28)  {$u_a$};
\node (vb) at (-1.7,3.78) [mynodestyle] {};
\node (vvb) at (-2,3.78)  {$u_b$};
\node (vc) at (-1.9,3.28) [mynodestyle] {};
\node (vvc) at (-2.2,3.28)  {$u_c$};

\node (v8)  at (-2,4.5) {};
\node (v9) at (-2,2.9) {};
\node (v11) at (-2.5,3) {};
\node (v10) at (-1,3) {};
\node (v12) at (-2.5,2.5) {};
\node (v13) at (-1,2.5) {};
\draw[dashed]  (v10) edge (v11);
\draw[dashed]  (v12) edge (v13);
\draw[ dotted]  (0.8,3.8) ellipse (0.5 and 1);
\node (v7) at (0.8,4.8) {};
\node (v14) at (0.8,2.8) {};
\draw[dashdotted]  (v8) edge (v7);
\draw[dashdotted]  (v9) edge (v14);

\node at (0.8,3.8) {$h$};
\node (v15) at (-2.5,1) {};
\node (v16) at (-1,1) {};
\node (v17) at (-2.5,0.5) {};
\node (v18) at (-1,0.5) {};
\node (v19) at (-2.5,0) {};
\node (v20) at (-1,0) {};
\node (v21) at (-2.5,-0.5) {};
\node (v22) at (-1,-0.5) {};
\node (v23) at (-2.5,-1) {};
\node (v24) at (-1,-1) {};

\draw[dashed]  (v15) edge (v16);
\draw[dashed]  (v17) edge (v18);
\draw [dashed] (v19) edge (v20);
\draw [dashed] (v21) edge (v22);
\draw [dashed] (v23) edge (v24);

\node (dva) at (-1.9,0.78) [mynodestyle] {};
\node (dvva) at (-2.2,0.78)  {$u_a$};
\node (dvb) at (-1.7,0.28) [mynodestyle] {};
\node (dvvb) at (-2,0.28)  {$u_b$};
\node (dvc) at (-1.9,-0.22) [mynodestyle] {};
\node (dvvc) at (-2.2,-0.22)  {$u_c$};
\node (dvd) at (-1.9,-0.72) [mynodestyle] {};
\node (dvvd) at (-2.2,-0.72)  {$u_d$};

\draw[ dotted]  (-1.9,0.08) ellipse (0.4 and 0.9);
\draw[ dotted]  (0.8,0.3) ellipse (0.2 and 0.4);

\node (v25) at (-2,1) {};
\node (v27) at (-2,-0.82) {};
\node (v26) at (0.8,0.7) {};
\node (v28) at (0.8,-0.1) {};
\draw [dashdotted] (v25) edge (v26);
\draw [dashdotted] (v27) edge (v28);
\node at (0.8,0.3) {$l$};
\node at (-3,4.3) {$V_a$};
\node at (-3,3.8) {$V_b$};
\node at (-3,3.3) {$V_c$};

\node at (-3,0.8) {$V_a$};
\node at (-3,0.3) {$V_b$};
\node at (-3,-0.2) {$V_c$};
\node at (-3,-0.7) {$V_d$};

\draw [thick] (6,5) rectangle (7.5,2);
\draw [thick,rounded corners] (2.8,-1.5) rectangle (5 , 1.5);
\draw [thick] (6,1.5) rectangle (7.5,-1.5);
\draw [thick,rounded corners] (2.8,2) rectangle (5 , 5);
\draw[ dotted]  (6.8,3.8) ellipse (0.5 and 1);
\node (v7) at (6.8,4.8) {};
\node (v14) at (6.8,2.8) {};
\draw[ dotted]  (6.8,0.3) ellipse (0.2 and 0.4);
\draw [ dotted] (4,3.5) ellipse (0.4 and 0.7);
\node [mynodestyle] at (4,4) {};
\node [mynodestyle] at (4,3.5) {};
\node [mynodestyle] at (4,3) {};
\node  at (3.5,4) {$u_au_b$};
\node  at (3.5,3.5) {$u_au_c$};
\node  at (3.5,3) {$u_bu_c$};

\draw [ dotted] (4,0) ellipse (0.4 and 0.7);
\node [mynodestyle] at (4,0.5) {};
\node [mynodestyle] at (4,0) {};
\node [mynodestyle] at (4,-0.5) {};
\node  at (3.5,0.5) {$u_au_b$};
\node  at (3.5,0) {$u_au_c$};
\node  at (3.5,-0.5) {$u_cu_d$};
\node (v36) at (4,4.2) {};
\node (v37) at (4,2.8) {};

\draw [dashdotted] (v36) edge (v7);
\draw [dashdotted] (v37) edge (v14);
\node at (6.8,3.8) {$h$};
\node (v38) at (4,0.7) {};
\node (v40) at (4,-0.7) {};
\node (v39) at (6.8,0.7) {};
\node (v41) at (6.8,-0.1) {};
\draw [dashdotted] (v38) edge (v39);
\draw [dashdotted] (v40) edge (v41);
\node at (6.8,0.3) {$l$};
\node at (5.5,-2.5) {$H$};
\node at (4,-2) {$L$};
\node at (6.75,-2) {$R$};
\end{tikzpicture}\caption{An example of the reduction when $k=3$ and $s=\binom{k}{2}=3$.  
In the (yes) case, $G$ contains a $k$-clique with the vertex set $\{a,b,c\}$. By (T2), there exist $u_a\in V_a$, $u_b\in V_b$ and $u_c\in V_c$ such that $\{u_a,u_b,u_c\}$ has at least $h$ common neighbors in $T$. Thus the $s$-verex set $\{u_au_b,u_au_c,u_bu_c\}$ has at least $h$ common neighbors in $H$; In the (no) case, $G$ does not contain any $k$-clique. Therefore, every subgraph of $G$ with  $s$ edges must have at least $k+1$ vertices. As a consequence,  every $s$-vertex  set in $L$ is constructed from  $(k+1)$ vertices in $A$ and thus has at most $l$ common neighbors by (T1). 
}\label{Fig:reduction}
\end{figure}
\begin{proof}
Suppose $V(G)=[n]$,  $T=(A\;\dot\cup\;B,E(T))$ and $A=V_1\;\dot\cup \;V_2\;\dot\cup\cdots\dot\cup \;V_n$. Our goal is to construct a bipartite graph $H=(L\;\dot\cup \;R,E)$ satisfying (H1) and (H2). We associate with each $V_i$ a vertex $i\in V(G)$. Let $\iota:A\to V(G)$ be the function that for each $i\in[n]$ and $u\in V_i$, $\iota(u)=i$. Then we construct the bipartite graph  $H=(L\;\dot\cup \;R,E)$ as follows.
\begin{description}
\item $L:=\{\{u_1,u_2\} : u_1,u_2\in A,\{\iota(u_1),\iota(u_2)\}\in E(G)\}$.
\item $R:=B$.
\item $E:=\{\{e,v\} : e=\{u_1,u_2\}\in L,v\in R,\{u_1,v\}\in E(T),\{u_2,v\}\in E(T)\}$.
\end{description}

Obviously, $H$ can be constructed in time $|V(T)|^2\cdot |G|$.
We only need to show that $H$ satisfies (H1) and (H2):
\begin{itemize}
\item[-] If  $G$ contains a $k$-vertex clique, we can assume that $\{a_1,a_2,\ldots,a_k\}\subseteq V(G)$ induces a $K_k$ in $G$. By (T2), for all $i\in [k]$ there exists $u_{a_i}\in V_{a_i}$ such that $\{u_{a_1},u_{a_2},\ldots,u_{a_k}\}$ has at least $h$ common neighbors in $B$. Let $X:=\{u_{a_1},u_{a_2},\ldots,u_{a_k}\}$ and $Y:=\Gamma^T(X)$. By definition, we have $|X|=k$ and $|Y|\ge h$. Let $E_X:=\binom{X}{2}$. Since $\{\iota(u_{a_i}),\iota(u_{a_j})\}=\{{a_i},{a_j}\}\in E(G)$ for all distinct $i,j\in [k]$, we have $E_X\subseteq L$. Hence for all $e\in E_X\subseteq L$ and $v\in Y\subseteq R$, $\{e,v\}\in E$. So $E_X\;\dot\cup\; Y$ induces a complete bipartite subgraph in $H$. It follows that $H$ satisfies (H1) because $|E_X|=\binom{|X|}{2}=\binom{k}{2}=s$ and $|Y|\ge h$.
\item[-] Suppose there exists $V\in\binom{L}{s}$ with $|\Gamma^H(V)|\ge \ell+1$, we must show that $G$ contains a clique with $k$ vertices. 
Let  $X:=\{u\in A : \text{there exists  $e\in V$ such that $u\in e$}\}$ and $Y:=\Gamma^H(V)$. By the definition of the edge set $E$,   we must have  $Y\subseteq \Gamma^T(X)$. Since $|Y|\ge \ell+1$ and for all $X'\in \binom{A}{k+1}$  $|\Gamma^T(X')|\le \ell$, we deduce that  $|X|\le k$; on the other hand, it is not hard to see that $V\subseteq\binom{X}{2}$, hence $|V|=\binom{k}{2}$ implies that $|X|>k-1$. Thus $|X|=k$ and for any distinct $u_1,u_2\in X$, $\{u_1,u_2\}\in V\subseteq L$. According to the definition of $L$,  for all  $u_1,u_2\in X$, $\{u_1,u_2\}\in  L$ implies that $\{\iota(u_1),\iota(u_2)\}\in E(G)$. Since $G$ is simple, $\{\iota(u) : u\in X\}$ must induce a $K_k$ in $G$.
\end{itemize}
\end{proof}

By Lemma~\ref{lem:thresholdimplyhard}, to prove Theorem~\ref{thm:gapreduction}, we only need to construct graphs satisfying the threshold property efficiently. Our main technical lemma is:

\begin{lemma}\label{lem:conthres}
For $k,n\in\mathbb{N}^+$ with $k=6\ell-1$ for some $\ell\in\mathbb{N}^+$ and $\lceil(n+1)^{\frac{6}{k+1}}\rceil>(k+1)!$, a bipartite graph $T=(A\;\dot\cup\; B,E)$ with $A=V_1\;\dot\cup \;V_2\; \dot\cup\cdots\dot\cup\; V_n$ satisfying the $(n,k,(k+1)!,\lceil(n+1)^{\frac{6}{k+1}}\rceil)$-threshold property and $|V(T)|=O(n^6)$ can be constructed  in $n^{O(1)}$ time. 
\end{lemma}
We postpone the proof of Lemma~\ref{lem:conthres} till Section~\ref{sec:paley-type}. Now we are ready to prove   Theorem~\ref{thm:gapreduction}.

\begin{theorem}[Theorem~\ref{thm:gapreduction} reformulated]
 For every  $n$-vertex graph $G$ and  positive integer $k$ with $n^{\frac{6}{k+6}}>(k+6)!$ 
there is a polynomial time algorithm constructing a
 bipartite graph $H=(L\;\dot\cup\;R,E)$ such that:
\begin{description}
\item[(Soundness)] if  $G$ contains a clique with $k$ vertices, then there are $s$ vertices in $L$ with at least $n^{\frac{6}{k+6}}$ common neighbors in $R$;
\item[(Completeness)] if  $G$ contains no clique with $k$ vertices, then any $s$ vertices in $L$ have at most $(k+6)!$ common neighbors in $R$,
\end{description}
where  $s=\binom{k'}{2}$ and $k'$ is the minimum integer such that  $k'+1$ is divisible by $6$ and $k'\ge k$.
\end{theorem}
\begin{proof}
We add a new clique with $k'-k$  vertices into $G$ and connect them with every vertex in $G$. It is easy to see that the new graph contains a $k'$-clique if and only if $G$ contains a $k$-clique. Since $ n^{\frac{6}{k+6}}>(k+6)!$ and $k+6\ge k'+1$, we have $\lceil n^{\frac{6}{k'+1}}\rceil\ge n^{\frac{6}{k+6}}>(k+6)!\ge (k'+1)!$. Apply Lemma~\ref{lem:conthres} on $n$ and $k'$, we obtain a graph with the $(n,k',(k'+1)!,\lceil(n+1)^{\frac{6}{k'+1}}\rceil)$-threshold property. The result then follows from Lemma~\ref{lem:thresholdimplyhard}.
\end{proof}

Using probabilistic method, we show:
\begin{lemma}\label{lem:conprob}
For  $k,\ell,h,n\in\mathbb{N}$ with $n\ge \max\{4(k+1)^2,20\}$, $\ell=2k^2+4k-1$ and $\ell<h\le n^{\frac{1}{4(k+1)}}$, one can construct in time polynomial in $n$ a bipartite random graph satisfying the $(n,k,\ell,h)$-threshold property with probability at least $\frac{9}{10}$.
\end{lemma}
Theorem~\ref{thm:probbicw1} then follows from Lemma~\ref{lem:conprob} and Lemma~\ref{lem:thresholdimplyhard}.

\section{Explicit Construction}\label{sec:paley-type}
In this section we give a deterministic construction of graphs satisfying the conditions (T1) and (T2) in Definition~\ref{def:thresholdproperty}. To that end, we need the definition of Paley-type bipartite graphs from~\cite{BabaiGKRSW96}. For certain parameter values, it is already known that such graphs satisfying (T1). As for (T2), we first give a partition of the vertices on the left side of the Paley-type bipartite graphs. Then we estimate the size of intersection  of any subset from this partition and the set of common neighbors  of any $k$-vertex set from the right side using Lemma~\ref{lem:manysolution}. In Lemma~\ref{lem:partitionofneighbor}, we then use a special case of Lemma~\ref{lem:manysolution} (when $k=1$) to prove  Paley-type  bipartite graphs with proper parameters  also satisfy (T2). Finally, we set up the parameters and prove Lemma~\ref{lem:conthres}.

\begin{definition}[Paley-type bipartite graph]
For any prime power $q$ and integer $d\mid q-1$, the Paley-type bipartite graph $P(q,d):=(A\;\dot\cup\; B,E)$ is defined as follows.
\begin{description}
\item[Vertices] $A=B=\mathbb{F}_q^\times$.
\item[Edges] For all $x\in A$ and $y\in B$, $xy\in E\iff (x+y)^{\frac{q-1}{d}}=1$.
\end{description}
\end{definition}

The Paley-type graphs have many nice properties, the following one is proved in \cite{kollar1996norm,BabaiGKRSW96}.
\begin{theorem}[Theorem 5.1 in \cite{BabaiGKRSW96}]\label{thm:bicliquefree}
For every prime power $q$ and integer $t>1$,
the graph $P(q^t,q-1)$ contains no subgraph isomorphic to $K_{t,t!+1}$.
\end{theorem}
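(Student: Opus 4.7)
The plan is to recast the edge condition of $G(p^t,p-1)$ as a polynomial system over $GF(q)$ and then apply a multilinear B\'ezout--type bound to derive a contradiction from having too many common neighbors. First, observe that $(q-1)/(p-1) = 1 + p + \cdots + p^{t-1}$ is precisely the exponent of the norm map $N = N_{GF(q)/GF(p)}$, so $xy$ is an edge iff $N(x+y) = 1$. Using Frobenius in characteristic $p$,
\[
N(x+y) \;=\; \prod_{j=0}^{t-1}(x+y)^{p^j} \;=\; \prod_{j=0}^{t-1}\bigl(x^{p^j}+y^{p^j}\bigr).
\]

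Next, suppose for contradiction that $x_1,\dots,x_t\in A$ and $y_1,\dots,y_{t!+1}\in B$ span a $K_{t,t!+1}$. I would introduce indeterminates $Z_0,\dots,Z_{t-1}$ and define the multilinear polynomials
\[
F_i(Z_0,\dots,Z_{t-1}) \;=\; \prod_{j=0}^{t-1}\bigl(x_i^{p^j}+Z_j\bigr)-1, \qquad i=1,\dots,t.
\]
Each $F_i$ has multi-degree $(1,1,\dots,1)$, with the top monomial $Z_0 Z_1 \cdots Z_{t-1}$ carrying coefficient $1$. For each $\ell$, substituting $Z_j=y_\ell^{p^j}$ makes every $F_i$ vanish, so $P_\ell := (y_\ell,y_\ell^p,\dots,y_\ell^{p^{t-1}})$ is a common zero of $F_1,\dots,F_t$ in $(GF(q)^\times)^t$. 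The map $y_\ell\mapsto P_\ell$ is injective (the first coordinate recovers $y_\ell$), so $F_1,\dots,F_t$ have at least $t!+1$ distinct common zeros in $\bar K^t$.

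The heart of the argument is an upper bound of $t!$ on the number of common zeros of $t$ such multilinear polynomials, which is (a specialization of) the Kollár--Rónyai--Szabó lemma appearing in \cite{DBLP:BabaiGKRSW96}. Conceptually this is multilinear B\'ezout / Bernstein--Kushnirenko on the toric variety $(\mathbb{P}^1)^t$: each $F_i$ has Newton polytope inside the unit cube $[0,1]^t$, and the mixed volume of $t$ copies of $[0,1]^t$ equals $t!$. To make this rigorous I would argue by induction on $t$: use $F_1=0$ to solve $Z_0 = P_1^{-1} - x_1$, where $P_1 = \prod_{j\ge 1}(x_1^{p^j}+Z_j)$, substitute into $F_2,\dots,F_t$ to obtain $t-1$ polynomial equations in $Z_1,\dots,Z_{t-1}$ of a controlled shape, and reduce to the inductive hypothesis, using distinctness of the $x_i$'s to keep the reduced system non-degenerate.

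The main obstacle is precisely this $t!$ bound: naive B\'ezout on total degree would only give $t^t$, so the entire game is to exploit the multilinear structure throughout the elimination so that the factor $t!$ (rather than $t^t$) emerges; a careless elimination step blows up the multi-degrees to $(2,2,\dots,2)$ and loses the sharp bound. A secondary issue is checking that the ideal $(F_1,\dots,F_t)$ is $0$-dimensional so that a B\'ezout-type count applies at all; this is afforded by distinctness of the $x_i$'s, and in any case our $t!+1$ points $P_\ell$ are isolated because their first coordinates $y_\ell$ are pairwise distinct.
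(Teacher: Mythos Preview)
The paper does not prove this theorem; it is quoted from \cite{kollar1996norm,DBLP:BabaiGKRSW96}. Your plan is exactly the argument of those references: rewrite the edge relation as $N_{GF(p^t)/GF(p)}(x+y)=1$, expand via Frobenius into $\prod_j(x^{p^j}+y^{p^j})=1$, and reduce the existence of $t!+1$ common neighbours of $x_1,\dots,x_t$ to the Koll\'ar--R\'onyai--Szab\'o lemma that a system $\prod_{j=1}^{t}(Z_j-a_{ij})=b_i$ $(i=1,\dots,t)$ with $a_{i_1 j}\ne a_{i_2 j}$ for every fixed $j$ has at most $t!$ solutions in $\bar K^t$; here $a_{ij}=-x_i^{p^j}$ and $b_i=1$, and column-distinctness holds because Frobenius is injective.

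Two small remarks on your sketch of the $t!$ bound itself. First, the proposed induction step---solve $F_1=0$ for $Z_0$ and substitute into $F_2,\dots,F_t$---is precisely the move you warn against: after clearing the denominator $P_1$ the remaining $t-1$ equations in $Z_1,\dots,Z_{t-1}$ acquire multidegree $(2,\dots,2)$, and the recursion does not close up to a system of the same shape. The induction in \cite{kollar1996norm} is organised differently. Second, for the Bernstein--Kushnirenko route you need the $P_\ell$ to be \emph{isolated} zeros of $V(F_1,\dots,F_t)$; pairwise distinctness of the first coordinates $y_\ell$ only tells you the $P_\ell$ are distinct points, not that they avoid a positive-dimensional component. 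Zero-dimensionality does hold under the column-distinctness hypothesis, but establishing it is part of the work of the lemma rather than a free side condition.
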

Therefore, for any prime $p$ the graph $P(p^t,p-1)$
 satisfies (T1) for $k\gets t-1$ and $\ell\gets t!$, our next step is to show that it also satisfies (T2) for  appropriate choices of parameter $h$ and partition of the vertex set $A$. We need the following lemma.

\begin{lemma}[Intersection]\label{lem:manysolution}
For any $d,k,r,s\in\mathbb{N}^+$ and  prime power $q$ with $q-1=rs$, $d\mid (q-1)$ and $\sqrt{q}\ge\frac{sk}{d}+1$. Let $a_1,a_2,\ldots,a_k$ be $k$ distinct elements in $\mathbb{F}^{\times}_{q}$, $g$ be the generator of $\mathbb{F}^\times_{q}$. For each $j\in [s]$, let $V_j:=\{g^{j+s},g^{j+2s},\ldots,g^{j+sr}\}$. Then for any $j\in [s]$, the number of solutions $x\in V_j$ to the system of equations $(a_i+x)^{\frac{q-1}{d}}=1 (\forall i\in[k])$ is in
$\frac{q}{sd^k}\pm k\sqrt{q}$.
\end{lemma}

Lemma~\ref{lem:manysolution} generalizes Lemma 3.8 in \cite{BabaiGKRSW96} by restricting the solutions to any subset $V_j (j\in[s])$. If we set $s:=1$, then we obtain Lemma 3.8 in \cite{BabaiGKRSW96}.
The intuition behind Lemma~\ref{lem:manysolution} is that the solutions of $(a_i+x)^{\frac{q-1}{d}}=1$ distribute ``randomly'':  the equation $(a_i+x)^{\frac{q-1}{d}}=1$ has $\frac{q-1}{d}$ solutions, we may say that a random generated element $x\in \mathbb{F}^\times_{q}$ satisfies this equation with probability $\frac{1}{d}$, hence $x$ satisfies the system of equations $(a_i+x)^{\frac{q-1}{d}}=1 (\forall i\in[k])$ with probability $\frac{1}{d^k}$. Since $V_j$ contains $\frac{1}{s}$ elements of $\mathbb{F}^{\times}_q$, we expect the number of solutions $x\in V_j$ to the system of equations $(a_i+x)^{\frac{q-1}{d}}=1 (\forall i\in[k])$ is dominated by  $\frac{q}{sd^k}$, and $k\sqrt{q}$ is the error term.
We postpone the proof of Lemma~\ref{lem:manysolution} to Section~\ref{sec:IntersectionLem}.

\begin{lemma}\label{lem:partitionofneighbor}
For any $r,s,t\in\mathbb{N}^+$ and a prime number $p$ with  $\frac{s}{p-1}+1\le\sqrt{p^{t+1}}$ and $p^{t+1}-1=rs$. Let $g$ be the generator of $\mathbb{F}^{\times}_{p^{t+1}}$. For each $i\in [s]$, let $V_i:=\{g^{i+s},g^{i+2s},\ldots,g^{i+sr}\}$.  Then in the Paley-type bipartite graph $P(p^{t+1},p-1)=(A\;\dot\cup\; B,E)$ with $A=V_1\;\dot\cup\;V_2\;\dot\cup\;\cdots\;\dot\cup\;V_s$, for any $t$ distinct indices $j_1,j_2,\ldots,j_{t}\in[s]$, there exists $\textbf{v}\in V_{j_1}\times V_{j_2}\times\cdots\times V_{j_t}$, such that $|\Gamma(\textbf{v})|\ge p$.
\end{lemma}

\begin{proof}
Fix $t$ distinct indices $j_1,j_2,\ldots,j_{t}\in[s]$.  Consider the sets $S:=V_{j_1}\times V_{j_2}\times\cdots\times V_{j_t}$ and  $\Gamma\langle S\rangle:=\{\{\textbf{v},u\} : \textbf{v}\in S, u\in B, u\in \Gamma(\textbf{v}) \}$.
Since $\frac{s}{p-1}+1\le\sqrt{p^{t+1}} $, for each element $u\in B=\mathbb{F}^\times_{p^{t+1}}$ and $i\in[t]$, applying Lemma~\ref{lem:manysolution} with
\[
q\gets p^{t+1}\quad d\gets p-1\quad k\gets 1\quad a_1\gets u\quad j\gets j_i,
\]
we conclude that the equation $(x+u)^{\frac{p^{t+1}-1}{p-1}}=1$ of $x$ has at least
\[\frac{p^{t+1}}{s(p-1)}- p^{\frac{t+1}{2}}\ge\frac{p^t}{s}+\frac{p^{t-1}}{s}- p^{\frac{t+1}{2}}\ge\frac{p^t}{s}+p^{\frac{t+1}{2}}- p^{\frac{t+1}{2}}=\frac{p^{t}}{s}\] 
solutions in each $V_{j_i}$ ($i\in [t]$). In other words, for all $i\in [t]$, $u$ has
 at least $\frac{p^{t}}{s}$ neighbors in each $V_{j_i}$.
Thus
$|\Gamma\langle S\rangle|\ge(\frac{p^t}{s})^t(p^{t+1}-1)$; on the other hand,  $|S|=(\frac{p^{t+1}-1}{s})^t$. By the pigeonhole principle, there exists $\textbf{v}\in S$ such  that
\begin{equation*}
|\Gamma(\textbf{v})|\ge\frac{|\Gamma\langle S\rangle|}{|S|}\ge\frac{(\frac{p^{t}}{s})^t (p^{t+1}-1)}{(\frac{p^{t+1}-1}{s})^t}
=\frac{p^{t^2}}{(p^{t+1}-1)^{t-1}}\ge\frac{p^{t^2}}{p^{t^2-1}}\ge p.
\end{equation*}
\end{proof}

\subsection{Proof of Lemma~\ref{lem:conthres}}
In the construction of  bipartite graphs satisfying the $(n,k,\ell,h)$-threshold property, we need the  famous Bertrand's Postulate from number theory, whose proof can be found in \cite{ram,erdos}.
\begin{theorem}[Bertrand's postulate ]
For every integer $n>1$, there exists a prime number $p$ such that $n<p<2n$.
\end{theorem}

For any positive integer $n$ and $k=6\ell-1$, by Bertrands's Postulate, we can
choose an arbitrary prime $p$ between $\lceil(n+1)^{\frac{1}{\ell}}\rceil$ and $2\lceil(n+1)^{\frac{1}{\ell}}\rceil$, then we construct the Paley-type graph $P(p^{k+1},p-1)=(A\;\dot\cup\; B,E)$. Let $s:=p^{\ell}-1$, we have $s\ge n$ and $p^{k+1}-1=p^{6\ell}-1=sr$, where $r=(p^{2\ell}+p^{\ell}+1)(p^{3\ell}+1)$. For each $i\in[s]$, let $V_i:=\{g^{i+s},g^{i+2s},\ldots,g^{i+rs}\}$, where $g$ is the generator of $\mathbb{F}^\times_{p^{k+1}}$. 

\medskip

\noindent\textbf{Claim 1}. The graph $P(p^{k+1},p-1)$ including the partition of its vertices set can be constructed in $p^{O(k)}$ time. 
\medskip

\noindent\textit{Proof of Claim 1}. We first find an irreducible polynomial with degree $k+1$ in  $p^{O(k)}$ time. Then we represent elements in $\mathbb{F}_{p^{k+1}}^\times$ as non-zero polynomials with degree at most $k$. To compute the edge set, we go through every pair of elements $(x,y)$  and check if $(x+y)^{\frac{p^{k+1}-1}{p-1}}=1$. 
We find a generator $g$ by enumerating every element in $e\in\mathbb{F}_{p^{k+1}}^\times$ and checking if the set $\{e^i: i\in[p^{k+1}-1]\}$ has $p^{k+1}-1$ elements. 
Then we construct the partition according to the definitions. By Theorem~\ref{thm:A} and Theorem~\ref{thm:B}, these tasks can be done in time $p^{3(k+1)}\cdot (kp)^{O(1)}\le p^{O(k)}$.
\hfill$\dashv$

\medskip

We only need to check $P(p^{k+1},p-1)$ satisfies (T1) and (T2) for parameter $n$, $k$, $\ell\gets (k+1)!$ and $h\gets\lceil(n+1)^{6/(k+1)}\rceil$.

By Theorem~\ref{thm:bicliquefree}, $P(p^{k+1},p-1)$ contains no subgraph isomorphic to $K_{k+1,(k+1)!+1}$, i.e. every $k+1$ distinct vertices in $A$ have at most $(k+1)!$ common neighbors in $B$. Thus $P(p^{k+1},p-1)$ satisfies (T1).

Since $\frac{s}{p-1}+1=\frac{p^\ell-1}{p-1}+1\le p^{3\ell } =\sqrt{p^{{k+1}}}$
, applying Lemma~\ref{lem:partitionofneighbor} with
$t\gets k$, we have for any $k$ distinct indices $a_1,a_2,\ldots,a_{k}\in[s]$, there exist $v_{a_i}\in V_{a_i}$ (for all $i\in[k]$) such  that $v_{a_1},v_{a_2},\ldots,v_{a_{k}}$ have at least $p\ge \lceil(n+1)^{\frac{1}{\ell}}\rceil>(k+1)!$ common neighbors in $B$.

In summary, $P(p^{k+1},p-1)$ with the partition $A=V_1\;\dot\cup \;V_2\;\dot\cup\cdots\dot\cup\;V_s$ satisfies the $(s,k,(k+1)!,\lceil(n+1)^{\frac{1}{\ell}}\rceil)$-threshold property. Note that when $s\ge n$, we can obtain a graph with $(n,k,(k+1)!,\lceil(n+1)^{\frac{1}{\ell}}\rceil)$-threshold property by setting $V_n':=V_n \dot\cup \;V_{n+1}\;\dot\cup\cdots\dot\cup\;V_s$ and returning 
$P(p^{k+1},p-1)$ with the partition $A=V_1\;\dot\cup \;V_2\;\dot\cup\cdots\dot\cup\;V_{n-1}\;\dot\cup \;V_n'$.

\subsection{Proof of the Intersection Lemma}\label{sec:IntersectionLem}
To prove Lemma~\ref{lem:manysolution}, we need to estimate the number of solutions to a system of equations in a subset $V_j$ of a finite field. The main idea is to define a function such that the sum of this function over $V_j$ is related to the number of these solutions.  Weil's theorem can be used as a black box to estimate the sum of this functions over the field. 
We need Lemma~\ref{lem:partition} to lift the sum range to the whole  field.

\begin{definition}[Character]
A character of a finite field $\mathbb{F}_q$ is a function $\chi: \mathbb{F}_q\rightarrow \mathbb{C}$ satisfying the following conditions:
\begin{itemize}
\item[-] $\chi(0)=0$.
\item[-] $\chi(1)=1$.
\item[-] $\forall a,b\in \mathbb{F}_q, \chi(ab)=\chi(a)\chi(b)$.
\end{itemize}
\end{definition}

\begin{remark}
Since for all $x\in \mathbb{F}^\times_q$, $x^{q-1}=1$, we have $\chi(x)^{q-1}=\chi(x^{q-1})=1$. That is $\chi$ maps all the elements in $\mathbb{F}^\times_q$ to the roots of $z^{q-1}=1$ in $\mathbb{C}$.
\end{remark}
\begin{definition}[Order]
A character $\chi$ of a finite field $\mathbb{F}_q$ has \emph{order} $d$ if $d$ is the minimal positive integer such that $\forall a\in \mathbb{F}_q^\times,\,\chi(a)^d=1$.
\end{definition}
\begin{theorem}[A. Weil]
Let $\mathbb{F}_q$ be a finite field, $\chi$  a character of $\mathbb{F}_q$ and $f(x)$  a polynomial over $\mathbb{F}_q$ if:
\begin{itemize}
\item[-] the order of $\chi$ is $d$;
\item[-] for any polynomial $g$ over $\mathbb{F}_q$ and $c\in \mathbb{F}_q$, $f(x)\neq c\cdot (g(x))^d$;
\item[-] the number of distinct roots of $f$ in the algebraic closure of $\mathbb{F}_q$ is $s$,
\end{itemize}
then
$$|\sum_{x\in \mathbb{F}_q}\chi(f(x))|\le(s-1)\sqrt{q}.$$
\end{theorem}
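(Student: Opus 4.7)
The plan is to reduce this bound to the Hasse--Weil estimate for the number of $GF(q)$-rational points on an algebraic curve, applied to the curve $C: y^d = f(x)$ over $GF(q)$. This is Weil's classical argument, and the depth of the proof lies entirely in the underlying point-counting theorem; the combinatorics surrounding it are routine.

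First I would relate the character sum to a point count. Using orthogonality, for each $a \in GF^\times(q)$ the number of $y \in GF(q)$ satisfying $y^d = a$ equals $\sum_{\psi}\psi(a)$, where $\psi$ ranges over all characters of $GF(q)$ whose order divides $d$. Summing over $x$, and separating the contribution of the trivial character (which gives roughly $q$) and the contribution from the zeros of $f$, yields
\begin{equation*}
\#C(GF(q)) \;=\; q \;+\; \sum_{\psi \neq \psi_0,\; \psi^d = 1}\;\sum_{x \in GF(q)} \psi(f(x)) \;+\; O(s),
\end{equation*}
where $\psi_0$ is the trivial character and the $O(s)$ accounts for vanishing of $f$. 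The same identity, with $\chi^j$ in place of $\psi$ for $j = 1, \ldots, d-1$, exhibits each sum we wish to bound (including the one for $\chi$ itself) as a summand in $\#C(GF(q)) - q$.

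Second, I would verify that $C$ is geometrically irreducible. The hypothesis that $f(x) \neq c \cdot g(x)^d$ for any polynomial $g$ over $GF(q)$ and any $c$ is precisely the Kummer-type condition guaranteeing that the affine curve $y^d = f(x)$ does not split over the algebraic closure of $GF(q)$; this is a standard exercise using the factorization of $f$ into irreducibles and the action of Galois. Then I would invoke the Hasse--Weil theorem: for the smooth projective model $\widetilde{C}$ of a geometrically irreducible curve of genus $g$ over $GF(q)$,
\begin{equation*}
\bigl|\,\#\widetilde{C}(GF(q)) - (q+1)\,\bigr| \;\le\; 2g\sqrt{q}.
\end{equation*}
The Riemann--Hurwitz formula applied to the degree-$d$ cover $(x,y) \mapsto x$, ramified exactly over the $s$ distinct roots of $f$ (and possibly at infinity), gives the bound $2g \le (d-1)(s-1)$ after accounting for the contribution of each ramification point. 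Passing from $\widetilde{C}$ to the affine curve $C$ changes the count by $O(1)$ terms depending only on $d$ and the behavior at infinity.

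Third, I would isolate the single sum of interest. Since the $d-1$ nontrivial characters of order dividing $d$ each produce a sum of the same shape, and the full alternating sum over them is bounded by $2g\sqrt{q} + O(s) \le (d-1)(s-1)\sqrt{q} + O(s)$, one extracts the desired estimate for the individual sum $\sum_x \chi(f(x))$ by the triangle inequality together with a careful accounting of the constant and ramification contributions, arriving at $|\sum_x \chi(f(x))| \le (s-1)\sqrt{q}$.

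The hard part is of course the Hasse--Weil bound itself, which is the Riemann hypothesis for curves over finite fields --- a theorem for which no elementary proof is known. For the purposes of this paper I would cite it from a standard reference (e.g.\ Weil's original memoir, or Schmidt's \emph{Equations over Finite Fields}), along with the standard Riemann--Hurwitz genus computation for superelliptic curves $y^d = f(x)$; the remainder of the argument is bookkeeping with orthogonality of characters.
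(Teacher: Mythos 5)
The paper does not prove this theorem; it cites it directly from Schmidt's \emph{Equations over Finite Fields} (Theorem 2C', p.~43), so there is no in-paper argument to compare your proposal against.

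Your sketch follows the classical route through the superelliptic curve $C\colon y^d = f(x)$, and the first two stages---translating the character sum into a point count via orthogonality, and using the hypothesis $f \neq c\,g^d$ to control the geometry of $C$---are conceptually sound. The decisive gap is in the third stage. The Hasse--Weil bound controls the \emph{aggregate} deviation $\#\widetilde C(GF(q)) - (q+1)$, which, up to bounded boundary corrections, equals the sum $\sum_{\psi\neq\psi_0,\;\psi^d=1}\sum_{x\in GF(q)}\psi(f(x))$ taken over \emph{all} $d-1$ nontrivial characters of order dividing $d$. Bounding the absolute value of this aggregate by $2g\sqrt q$ does not, by any triangle-inequality manipulation, bound the absolute value of a single summand $\sum_x\chi(f(x))$: the individual character sums could each be large and cancel against one another, so a bound on the sum says nothing about its terms. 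Isolating one character requires the zeta-function factorization $Z(\widetilde C,T)=\prod_{\psi^d=1}L(T,\psi)$, the fact that under the stated hypotheses on $f$ each $L(T,\chi)$ is a \emph{polynomial} of degree at most $s-1$ (this is where the count $s$ of distinct roots of $f$ really enters---as the degree of a single L-factor, not merely through a genus estimate for the whole curve), and then the Riemann hypothesis applied to $L(T,\chi)$ \emph{alone}, giving $\sum_x\chi(f(x)) = -\sum_i\alpha_i$ with $\lvert\alpha_i\rvert = \sqrt q$ and at most $s-1$ terms. That per-character L-function decomposition is the missing substance behind your phrase ``triangle inequality together with a careful accounting of the constant and ramification contributions''; the suggestive numerology ($2g\le(d-1)(s-1)$, divided among $d-1$ nontrivial characters) does not by itself produce the bound on an individual sum. (For what it is worth, Schmidt's own proof at the cited location is via Stepanov's elementary method and avoids curves and zeta functions altogether, but your chosen route is also legitimate provided the L-function step is actually carried out.)
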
(See \cite{Schmidt}, page 43, Theorem 2C')
\begin{remark}
It is well known that the expected translation distance after $n$-step random walk in $2$-dimension space is about $\sqrt{n}$. By the character sum theorem, we can see that the values of $f(x)$ for $x\in \mathbb{F}_q$ distribute randomly to some extent.
\end{remark}

Suppose $g$ is the generator of $\mathbb{F}_q$, where $q$ is a prime power and $q-1=rs$ for some $s,r\in\mathbb{N}$.  For all $i\in[s]$ let $V_i:=\{g^{i+s},g^{i+2s},\ldots,g^{i+rs}\}$. It is obvious that
$\mathbb{F}^\times_q=V_1\cup V_2\cup\cdots\cup V_{s}$ and for all $i\in[s], |V_i|=r$. With these notations, we can show:
\begin{lemma}\label{lem:partition}
Suppose $f$ is a function from $\mathbb{F}_q$ to $\mathbb{C}$, then  for all $i\in[s]$
$$\sum_{z\in V_i} f(z)=\frac{1}{s}\sum_{x\in \mathbb{F}^\times_q} f(g^ix^s).$$
\end{lemma}
\begin{proof}
For any element $z=g^{i+js}\in V_i(j\in[r])$, consider the set
$$X_j:=\{x\in \mathbb{F}^\times_q : g^ix^s=g^{i+js}\}.$$
It is easy to check that $X_j=\{g^{j+r},g^{j+2r},\ldots,g^{j+sr}\}$, i.e. for each $z\in V_i$, the equation  $g^ix^s=z$ of $x$ has exactly $s$ solutions in $\mathbb{F}^\times_q$.
Thus $\sum_{z\in V_i} f(z)=\frac{1}{s}\sum_{x\in \mathbb{F}^\times_q} f(g^ix^s)$.\\
\end{proof}


Now we are ready to prove Lemma~\ref{lem:manysolution}.

\begin{lemma}[Lemma~\ref{lem:manysolution} restated]\label{lem:intersection}
For any $d,k,r,s\in\mathbb{N}^+$ and  prime power $q$ with $q-1=rs$, $d\mid (q-1)$ and $\sqrt{q}\ge\frac{sk}{d}+1$. Let $a_1,a_2,\ldots,a_k$ be $k$ distinct elements in $\mathbb{F}^{\times}_{q}$, $g$ be the generator of $\mathbb{F}^\times_{q}$. For each $j\in [s]$, let $V_j:=\{g^{j+s},g^{j+2s},\ldots,g^{j+sr}\}$. Then for any $j\in [s]$, the number of solutions $x\in V_j$ to the system of equations $(a_i+x)^{\frac{q-1}{d}}=1 (\forall i\in[k])$ is in
$\frac{q}{sd^k}\pm k\sqrt{q}$.
\end{lemma}

\begin{proof}[of Lemma~\ref{lem:manysolution}]
Let $\omega\in\mathbb{C}$ be the primitive $d^{th}$ root of unity, define a function $\chi:\mathbb{F}_q\rightarrow \mathbb{C}$ as follows:
\begin{itemize}
\item[-] $\chi(0):=0$;
\item[-] for $g^\ell\in \mathbb{F}^\times_q$ set $\chi(g^\ell):=\omega^\ell$.
\end{itemize}
Then:
\begin{description}
\item[i] $\chi$ is  a character of $\mathbb{F}_q$. Because $\chi(g^a\cdot g^b)=\omega^{a+b}=\chi(g^a)\chi(g^b)$ and 
$\chi(1)=\chi(g^{q-1})=w^{q-1}=1$ by $d\mid q-1$.
\item[ii] The order of $\chi$ is $d$. Observe that for $n\in\mathbb{N}^+$: $\chi(g)^n=\chi(g^n)=1\iff \omega^n=1\iff d\mid n$, thus the order of $\chi$ is $\ge d$. On the other hand, note that $g$ is a generator of $\mathbb{F}_q^\times$. For all
$z\in\mathbb{F}_q^\times$, there exists $i_z\in [q-1]$ such that $z=g^{i_z}$. Thus $\chi(z)^d=\chi(g^{i_zd})=\omega^{di_z}=1$, which implies that the order of $\chi$ is $\le d$.
\item[iii] $\chi(x)=1\iff x^{\frac{q-1}{d}}=1$. Suppose $x=g^i$ and notice that $g^\ell=1 \iff q-1\mid\ell$, it follows that $ 1=x^\frac{q-1}{d}=g^{\frac{i(q-1)}{d}}\iff q-1\mid \frac{i(q-1)}{d}\iff d\mid i\iff \omega^i=1\iff\chi(x)=\chi(g^i)=1$.
\end{description}


By iii, $(a_i+x)^{\frac{q-1}{d}}=1\iff \chi(a_i+x)=1$,  let $$X:=\{x\in V_j : \forall i\in[k],\chi(x+a_i)=1 \}.$$
Recall that $a\pm b$ denotes the set of real number between $a-b$ and $a+b$, our goal is to show that $|X|\in\frac{q}{sd^k}\pm k\sqrt{q}$.

Define a polynomial $h\colon\mathbb{C}\to\mathbb{C}$ by setting $h(z):=\frac{z^{d}-1}{z-1}=1+z+\ldots+z^{d-1}$, then:
\begin{description}
\item $h(1)=d$;
\item $h(\omega^i)=0$, for all $i\in [d-1]$;
\item $h(0)=1$.
\end{description}

Let $H(x):=\prod_{i=1}^k h(\chi(a_i+x))$. It is easy to check that:
\begin{description}
\item if $x\in X$, then $H(x)=d^k$;
\item if $x=-a_i$ for some $i\in[k]$ and $\chi(x+a_{i'})=1 (\forall i'\in[k],i'\neq i)$, then $H(x)=d^{k-1}$;
\item otherwise $H(x)=0$.
\end{description}

Now consider the sum $S:=\sum_{x\in V_j}H(x)$, we have
\begin{equation}\label{eq:XandS}
|X|d^k\le S\le |X|d^k+kd^{k-1}.
\end{equation}
We only need to estimate $S$. Using Lemma \ref{lem:partition}, we can rewrite $S$ as
\begin{align*}
S &= \sum_{x\in V_j}H(x)\\
&= \frac{1}{s}\sum_{x\in \mathbb{F}^\times_q}H(g^jx^s)\\
&= \frac{1}{s}[\sum_{x\in \mathbb{F}_q}H(g^jx^s)-H(0)].
\end{align*}
Expand the products in $H(g^jx^s)$, we get
\begin{align*}
&\sum_{x\in \mathbb{F}_q}H(g^jx^s) \\
=& \sum_{x\in \mathbb{F}_q}\prod_{i=1}^k h(\chi(a_i+g^jx^s))\\
=& \sum_{x\in \mathbb{F}_q}\prod_{i=1}^k [1+\chi(a_i+g^jx^s)+\ldots+\chi(a_i+g^jx^s)^{d-1}]\\
=& \sum_{x\in \mathbb{F}_q}\sum_{\psi\in \{0,1,\ldots,d-1\}^k}\chi(f_\psi(x))\\
=& q+\sum_{\psi\in \{0,1,\ldots,d-1\}^k\setminus\{0\}^k}\sum_{x\in \mathbb{F}_q}\chi(f_\psi(x)),
\end{align*}
where $\psi\in\{0,1,\ldots,d-1\}^k$ is a function from $[k]$ to $\{0,1,\ldots,d-1\}$  and $f_\psi(x):=\prod_{i=1}^k(a_i+g^jx^s)^{\psi(i)}$.

\medskip

To invoke Weil's theorem on the character sum $\sum\chi(f_\psi(x))$ for every $\psi\in\{0,1,\ldots,d-1\}^k\setminus\{0\}^k$, we need to check:
\begin{itemize}
\item[-] The order of $\chi$ is $d$, this is done in the previous discussion.

\item[-]  $f_\psi(x)\neq c\cdot(g(x))^d$ for any polynomial $g$ over $\mathbb{F}_q$ and $c\in \mathbb{F}_q$. It suffices to show that any solution of $f_\psi(x)$ in the algebraic closure of $\mathbb{F}_q$ has  multiplicity $\le d-1$. For each $i\in[k]$, let $f_{i}(x)=a_i+g^jx^s$. Note that the derivative of $f_{i}(x)$ is $f_{i}'(x)=s\cdot g^jx^{s-1}$. We claim that all the roots of $f_{i}(x)$ have multiplicity $1$, otherwise by Theorem~\ref{thm:multiplicity} $f_{i}(x)$ and $f_{i}'(x)$ have a common root $\alpha$.  We must have
$s\cdot a_i=s\cdot(a_i+g^j\alpha^s)-(s\cdot g^j\alpha^s)=s\cdot f_i(\alpha)-f'_i(\alpha)\alpha=0$.
 This is impossible because $q-1=sr$ and Lemma~\ref{lem:orderp} implies $(rs)\cdot a_i=-a_i\neq 0$; on the other hand, for any distinct $i,i'\in[k]$, $f_{i}(x)$ and $f_{i'}(x)$ do not share a common root because $a_i\neq a_{i'}$. Since $f_\psi=\prod_{i=1}^kf_i(x)^{\psi(i)}$,  each root of $f_\psi$ has multiplicity $\le d-1$.
\item[-] $f_\psi$ has at most $ks$ distinct roots in the algebraic closure field of $\mathbb{F}_q$. This follows by the facts that $f_\psi$ is a product of $k$ polynomials and each polynomial has at most $s$ distinct roots. 
\end{itemize}
By Weil's theorem
$$|\sum_{x\in \mathbb{F}_q}\chi(f_\psi(x))| \le (ks-1)\sqrt{q}.$$
So
\begin{align*}
|S+\frac{H(0)}{s}-\frac{q}{s}|&=\frac{1}{s}\sum_{\psi\in \{0,1,\ldots,d-1\}^k\setminus\{0\}^k}\sum_{x\in \mathbb{F}_q}\chi(f_\psi(x))\\
&\le \frac{d^k}{s}(ks-1)\sqrt{q}.
\end{align*}
Thus we obtain the following estimate for $S$:
\begin{equation}\label{eq:estimateS}
|S|\in \frac{q-H(0)}{s}\pm \frac{d^k}{s}(ks-1)\sqrt{q}.
\end{equation}
Finally, notice that $H(0)\le d^k$ and $\sqrt{q}>\frac{sk}{d}+1$, we have
\begin{align*}
|X|&\in\frac{S}{d^k}\pm \frac{k}{d} \quad\quad\text{(by  (\ref{eq:XandS}))}\\
&\subseteq\frac{q-H(0)\pm(ks-1)d^k\sqrt{q}}{sd^k}\pm \frac{k}{d} \quad\quad\text{ (by  (\ref{eq:estimateS}))}\\
&\subseteq\frac{q}{sd^k}\pm (k\sqrt{q}+\frac{k}{d}+\frac{1}{s}-\frac{\sqrt{q}}{s}) \quad\quad\text{ (using  $H(0)\le d^k$)}\\
&\subseteq\frac{q}{sd^k}\pm k\sqrt{q}.  \quad\quad\text{ (using  $\sqrt{q}>\frac{sk}{d}+1$)}
\end{align*}
\end{proof}

\section{Probabilistic construction}\label{sec:probaconst}
One important methodology we learn from  Erd\H{o}s's paper~\cite{erdos1959graph} on graph theory and probability  is that to prove  some graphs with a certain property exist, it suffices to demonstrate that the probability of such graphs is positive in some probability space. 

For $n\in\mathbb{N}^+$ and $p : \mathbb{N}\to [0,1]$. We define  $\mathcal{G}(n,p)=(\Omega,\Pr)$ as the probability space of \emph{bipartite random graphs} where $\Omega$ is the set of all bipartite graphs $G=(A\;\dot\cup\; B,E)$  with $A=B=[n^2]$  and  for each  graph $G\in\Omega$, $\Pr(G):=p(n)^{|E(G)|}(1-p(n))^{{n}^{4}-|E(G)|}$.  To shorten notation, we write $p$ instead of $p(n)$. We use $G(n,p)$ to denote the random graph sampled from 
$\mathcal{G}(n,p)$.
The probability of $G(n,p)$ satisfying some property $P$ is the probability of the event $E:=\{G\in\Omega : \text{$G$ satisfies $P$}\}$.

We  partition $A$ into $n$ subsets $(V_1,V_2,\ldots, V_{n})$ with $V_i:=\{1+(i-1)n,2+(i-1)n,\ldots,n+(i-1)n\}$ for each $i\in[{n}]$. We will show that with high probability $G(n,p)$ with the partition $A=V_1\;\dot\cup\;V_2\;\dot\cup\cdots\dot\cup\; V_{n}$ satisfies the  $(n,k,\ell,h)$-threshold property for  $\ell=2k^2+4k-1<h\le n^{\frac{1}{4(k+1)}}$ and $p=n^{-\frac{2(k+\ell+3)}{(k+1)(\ell+1)}}$.  To that end, we
use Lemma~\ref{lem:T1} to give an upper bound for $\Pr(\text{$G(n,p)$ does not satisfy (T1)})$ and
 Lemma~\ref{lem:T2} to give an upper bound for  $\Pr(\text{$G(n,p)$ does not satisfy (T2)})$.

\medskip

Before giving these upper bounds, we need some preparations.  Observe that the vertex set of graphs in $\Omega$ is fixed, we can identify every graph in $\Omega$ with its edge set.  Let $E_{A,B}$ be the set of edges between $A$ and $B$. For every edge set $E\subseteq E_{A,B}$, let $G_E:=(A\;\dot\cup\;B,E)$. 
We say an event $X$ depends only on a set $E\subseteq E_{A,B}$ if there exists a  $C\subseteq 2^E$ such that
\[
X=\{G\in\Omega : E(G)\cap E\in C\}.
\]
We call $C$ the restriction of $X$ to $E$.
\begin{lemma}\label{lem:subspace}
Suppose  $E_1\subseteq E_{A,B}$ and $X$ is an event depending only on $E_1$. Let $C\subseteq 2^{E_1}$ be the restriction of $X$ to $E_1$.
Then we have
\[
\Pr(X)=\sum_{E'\in C}p^{|E'|}(1-p)^{|E_1|-|E'|}.
\]
\end{lemma}
\begin{proof}
Let $E_2:=E_{A,B}\setminus E_1$. Note that $n^4=|E_{A,B}|=|E_1|+|E_2|$.
\begin{align*}
\Pr(X)&=\sum_{G\in X}\Pr(G)\\
&=\sum_{G\in \Omega,E(G)\cap E_1\in C}\Pr(G)\\
&=\sum_{E'\in C}\sum_{E''\in 2^{E_2}}\Pr(G_{E'\cup E''})\\
&=\sum_{E'\in C}\sum_{E''\in 2^{E_2}}p^{|E'|+|E''|}(1-p)^{n^4-|E'|-|E''|}\\
&=\sum_{E'\in C}p^{|E'|}(1-p)^{|E_1|-|E'|}\sum_{E''\in 2^{E_2}}p^{|E''|}(1-p)^{|E_2|-|E''|}\\
&=\sum_{E'\in C}p^{|E'|}(1-p)^{|E_1|-|E'|}.
\end{align*}
\end{proof}

\begin{lemma}\label{lem:independentVar}
Suppose $E_{A,B}=E_1\;\dot\cup\; E_2$. 
Event $X$ depends only on $E_1$. Event $Y$ depends only on $E_2$. 
Then we have
\[
\Pr(X\cap Y)=\Pr(X)\Pr(Y).
\]
\end{lemma}
\begin{proof}
Let $C$ be the restriction of $X$ to $E_1$, $D$ be the restriction of $Y$ to $E_2$.
By Lemma~\ref{lem:subspace},
$\Pr(X)=\sum_{E'\in C}p^{|E'|}(1-p)^{|E_1|-|E'|}$ and 
$\Pr(Y)=\sum_{E'\in D}p^{|E'|}(1-p)^{|E_2|-|E'|}$.
\begin{align*}
\Pr(X\cap Y)&=\sum_{G\in X\cap Y}\Pr(G)\\
&=\sum_{E'\in C}\sum_{E''\in D}\Pr(G_{E'\cup E''})\\
&=\sum_{E'\in C}\sum_{E''\in D}p^{|E'|+|E''|}(1-p)^{n^4-|E'|-|E''|}\\
&=\sum_{E'\in C}p^{|E'|}(1-p)^{|E_1|-|E'|}\sum_{E''\in D}p^{|E''|}(1-p)^{|E_2|-|E''|}\\
&=\Pr(X)\Pr(Y).
\end{align*}
\end{proof}

For $S\subseteq A$ and $T\subseteq B$, we define a random variable $X_{S,T} : \Omega\to \mathbb{R}$ as follows. For every $G\in\Omega$, let $X_{S,T}(G):=1$ if $T\cup S$ forms a complete bipartite subgraphs in $G$, otherwise $X_{S,T}(G)=0$. Let $E_1:= \{\{s,t\}:s\in S,t\in T\}$, $C:= \{E_1\}$ and $X:=\{G\in\Omega: X_{S,T}(G)=1\}$. It is straightforward to verify that $X$ depends only on $E_1$ and $C$ is the restriction of $X$ to $E_1$.
Applying Lemma~\ref{lem:subspace}, we get $\Pr(X_{S,T}=1)=\Pr(X)=p^{|S|\cdot |T|}$. Thus
\begin{equation}\label{eq:ExpectST}
\mathrm{E}[X_{S,T}]=\Pr(X_{S,T}=1)=p^{|S|\cdot |T|}.
\end{equation}
\subsection{Estimate for $\Pr(\text{$G(n,p)$ does not satisfy (T1)})$}

To bound the probability of $G(n,p)$ containing a subgraph isomorphic to $K_{k+1,h}$, we need the following lemma, which is  a simple consequence of Markov's Inequality.
\begin{lemma}
Let $X$ be a nonnegative integral  random variable, then
$\Pr(X>0)\le \mathrm{E}[X]$.
\end{lemma}

\begin{lemma}\label{lem:T1}
With probability at most $n^{-2}$, $G(n,p)$ does not satisfy (T1).

\end{lemma}
\begin{proof}
Let $X$ be the number of $K_{k+1,\ell+1}$-subgraphs in $G(n,p)$ with the left $k+1$ vertices in $A$ and the other $\ell+1$ vertices in $B$, i.e., for every $G\in\Omega$
\[
X(G):=\left|\left\{(A',B') :  A'\in \binom{A}{k+1}, B'\in \binom{B}{\ell+1}, B'\subseteq\Gamma^G(A')\right\}\right|.
\]

 Then by the linearity of expectation
\begin{align*}
\mathrm{E}[X]&=\sum_{A'\in\binom{A}{k+1},B'\in\binom{B}{\ell+1}}\mathrm{E}[X_{A',B'}]\\
&=\binom{n^2}{k+1}\cdot\binom{n^2}{\ell+1}\cdot p^{(k+1)(\ell+1)} \quad\text{(using (\ref{eq:ExpectST}))}\\
&\le n^{2(k+1+\ell+1)}\cdot n^{-2(k+\ell+3)}\\
&= n^{-2}.
\end{align*}
We have $\Pr(X>0)\le \mathrm{E}[X]\le n^{-2}$.
It follows from the definition that
\[
\Pr(\text{$G(n,p)$ does not satisfy (T1)})\le\Pr(X>0)\le n^{-2}.
\]
\end{proof}
Hence, when  $n\rightarrow\infty$, $ G(n,p)$ satisfies the first condition of $(n,k,\ell,h)$-threshold property with high probability.

\subsection{Estimate for $\Pr(\text{$G(n,p)$ does not satisfy (T2)})$}

For $S\in\binom{A}{k}$ define a random variable $X_S : \Omega\to\mathbb{R}$ such that for every $G\in\Omega$,  
\begin{equation*}
X_S(G):=\left|\left\{T\in\binom{B}{h} : T\subseteq\Gamma^G(S)\right\}\right|.
\end{equation*}
In other words, $X_S$ denotes the number of $K_{k,h}$-subgraphs in $G(n,p)$ whose left side vertex set is $S$. 
\begin{lemma}\label{lem:XS}
 If $h\le n^{\frac{1}{4(k+1)}}$, then $\Pr(X_S=0)\le n^{-\frac{1}{4(k+1)}}$.
\end{lemma}
\begin{proof}
By the Chebyshev's Inequality, $\Pr(X_S=0)\le\frac{\mathrm{Var}[X_S]}{\mathrm{E}[X_S]^2}$. To bound $\Pr(X_S=0)$, we need to estimate $\mathrm{E}[X_S]$ and $\mathrm{Var}[X_S]$. By the linearity of expectation and (\ref{eq:ExpectST}), we have
\begin{equation}\label{eq:ES}
\mathrm{E}[X_S]=\sum_{T\in\binom{B}{h}}\mathrm{E}[X_{S,T}]=\binom{n^2}{h}\cdot p^{kh}.
\end{equation}
It follows that
 \begin{align*}
 & \mathrm{Var}[X_S]\\
 =&\mathrm{E}[X_S^2]-\mathrm{E}[X_S]^2\\
 =&\mathrm{E}[(\sum_{T\in\binom{B}{h}}X_{S,T})^2]-\mathrm{E}[X_S]^2\\
 =&\sum_{T,T'\in\binom{B}{h}}\mathrm{E}[X_{S,T}X_{S,T'}]-\mathrm{E}[X_S]^2\\
 =&\sum_{i=0}^{h}\sum_{T,T'\in\binom{B}{h},|T\cap T'|=i}\mathrm{E}[X_{S,T\cup T'}]-\mathrm{E}[X_S]^2\\
 =&\sum_{i=0}^{h}\sum_{T,T'\in\binom{B}{h},|T\cap T'|=i}p^{|S|\cdot |T\cup T'|}-\mathrm{E}[X_S]^2\quad \text{(using (\ref{eq:ExpectST}))}\\
 =&\sum_{i=0}^{h}\sum_{T,T'\in\binom{B}{h},|T\cap T'|=i} p^{2hk-ik}-\mathrm{E}[X_S]^2 \\
  =&\sum_{i=0}^h \binom{n^2}{h}\binom{n^2-h}{h-i}\binom{h}{i}\cdot p^{2hk-ik}-\mathrm{E}[X_S]^2\\
\le&\sum_{i=1}^h \binom{n^2}{h}\binom{n^2-h}{h-i}\binom{h}{i}\cdot p^{2hk-ik}\quad \text{(using $\binom{n^2-h}{h}\le\binom{n^2}{h}$ and (\ref{eq:ES}))}\\
 = &\binom{n^2}{h}\binom{n^2}{h}p^{2hk}\sum_{i=1}^h \frac{\binom{n^2-h}{h-i}\binom{h}{i}p^{-ik}}{\binom{n^2}{h}}\\
 = &\mathrm{E}[X_S]^2\sum_{i=1}^h \frac{\binom{n^2-h}{h-i}\binom{h}{i}p^{-ik}}{\binom{n^2}{h}} \quad \text{(using (\ref{eq:ES}))}\\ 
  \le &\mathrm{E}[X_S]^2\sum_{i=1}^h h^{2i}{n}^{-2i}p^{-ik} 
\quad\text{(using $\binom{n^2-h}{h-i}\binom{n^2}{i}\le\binom{n^2}{h}\binom{h}{i}$ and $\binom{h}{i}/h^i\le\binom{n^2}{i}/n^{2i}$)}\footnotemark\\ 
\le &\mathrm{E}[X_S]^2 \sum_{i=1}^h n^{-2i[1-\frac{k(k+\ell+3)}{(k+1)(\ell+1)}-\frac{1}{4(k+1)}]}\quad\text{(using $h\le n^{\frac{1}{4(k+1)}}$ and $p=n^{-\frac{2(k+\ell+3)}{(k+1)(\ell+1)}}$)}\\
    = &\mathrm{E}[X_S]^2 \sum_{i=1}^h n^{\frac{-i}{2(k+1)}}\quad\text{(using $\ell=2k^2+4k-1$)}\\ 
 \le &\mathrm{E}[X_S]^2\cdot  hn^{-\frac{1}{2(k+1)}}\\
\le &\mathrm{E}[X_S]^2\cdot  n^{-\frac{1}{4(k+1)}} \quad\text{(using $h\le n^{\frac{1}{4(k+1)}}$)}.
 \end{align*}
 \footnotetext{Observe that  $\binom{n^2-h}{h-i}\binom{n^2}{i}\le\binom{n^2-i}{h-i}\binom{n^2}{i}=\binom{n^2}{h}\binom{h}{i}$ and the function $f(a)=\binom{a}{i}/a^i$ is monotone.}
Applying the Chebyshev's Inequality, we obtain $\Pr(X_S=0)\le n^{-\frac{1}{4(k+1)}}$.
\end{proof}

\begin{lemma}\label{lem:T2}
If $n\ge 4(1+k)^2$ and $h\le n^{\frac{1}{4(k+1)}}$, then
with probability at most $n^{-1}$ $G(n,p)$ does not satisfy the second condition of $(n,k,\ell,h)$-threshold property.
\end{lemma}
\begin{proof}
For $I=\{a_1,a_2,\ldots,a_k\}\in\binom{[n]}{k}$, consider the subsets $V_{a_1},V_{a_2},\ldots,V_{a_k}$ in the partition of $A$. Recall that $V_{a_i}=\{1+(a_i-1)n,2+(a_i-1)n,\ldots,n+(a_i-1)n\}$  for each $i\in[k]$. Denote by $Y_I$  the number of $K_{k,h}$-subgraph in $G(n,p)$ with the  restriction that each $V_{a_i} (i\in [k])$ contains exactly one vertex from  the left side of such $K_{k,h}$-subgraphs. For each $j\in[n]$ let $S_j:=\{j+(a_1-1)n,j+(a_2-1)n,\ldots,j+(a_k-1)n\}$. For  each $G\in\Omega$ let  $X_{S_j}(G)$ be the number of $h$-vertex sets in $\Gamma^G(S_j)$. 
By Lemma~\ref{lem:XS}, $\Pr(X_{S_j}=0)\le n^{-\frac{1}{2(k+1)}}$. 

\medskip
\noindent \textit{Claim 1}.  $\Pr(\forall j\in[{n}], X_{S_j}=0)=\prod_{j=1}^{{n}}\Pr(X_{S_j}=0)$.

\medskip

\noindent\textit{Proof of the Claim 1}. Let 
\[
X:=\{G:X_{S_n}(G)=0\}
\] 
and 
\[
Y:=\{G:\text{for all $j\in[n-1]$, $X_{S_j}(G)=0$}\}.
\]
It suffices to show that $\Pr(X\cap Y)=\Pr(X)\Pr(Y)$.
Let $E_1:=\{\{a,b\}:a\in S_n,b\in B\}$ and $E_2:=E_{A,B}\setminus E_1$. 
 Note that for all $j\in[n-1]$, $S_j\cap S_{n}=\varnothing$. By  definitions, we have that 
$X$ depends only on $E_1$ and $Y$ depends only on   $E_2$.
Applying Lemma~\ref{lem:independentVar}, we get $\Pr(X\cap Y)=\Pr(X)\Pr(Y)$.
\hfill$\dashv$

\medskip

Putting all together, we obtain
\[
\Pr(Y_I=0)\le \Pr(\forall j\in[{n}], X_{S_j}=0)=\prod_{j=1}^{{n}}\Pr(X_{S_j}=0)\le n^{-\frac{n}{4(k+1)}}.
\]
$G(n,p)$ does not satisfy the second condition of threshold property if there exists $I\in\binom{[n]}{k}$ such that $Y_I=0$. By the union bound
\[
\Pr(G(n,p)\mbox{ does not satisfy (T2)})\le \sum_{I\in\binom{[n]}{k}}\Pr(Y_I=0)\le n^{k-\frac{{n}}{4(k+1)}}\le n^{-1}.
\]
\end{proof}

\subsection{Proof of Lemma \ref{lem:conprob}}
Choose $n$ large enough such that ${n}\ge 4(1+k)^2$ and $n\ge 20$, then
from Lemma \ref{lem:T1} and Lemma \ref{lem:T2} we can deduce
\[
\Pr(G(n,p)\mbox{ does not satisfy T1 or T2})\le  n^{-2}+n^{-1}\le 1/10.
\]

Thus $G(n,p)$ satisfies the $({n},k,\ell,h)$-threshold property  with  probability larger than $9/10$.

\section{Conclusions}\label{sec:conclusion}
The main result of this paper is to give an \fpt-reduction from $\kclique$ to $\kbiclique$, thus proving the $\Wone$-hardness of $\kbiclique$. Our reduction for $\kbiclique$ uses a class of graphs satisfying the $(n,k,\ell,h)$-threshold property, which might be of some independent interest. We give a probabilistic construction of graphs with $(n,k,\ell,h)$-threshold property for $\ell=\Theta(k^2)$, which yields an $f(k)\cdot n^{o(\sqrt{k})}$-time lower bound for $\kbiclique$ for any computable function $f$ assuming a randomized version of $\ETH$. An immediate open question is to give an explicit construction of graphs satisfying the $(n,k,\ell,h)$-threshold property for $\ell=\Theta(k^2)$, hence obtain the same lower bound for $\kbiclique$ under $\ETH$. Another obvious question for further research is whether there exists $f(k)\cdot n^{o(k)}$-time algorithm solving $\kbiclique$. We believe that the answer is negative. Note that $\kbiclique$ is a special case of \psubiso. The dichotomy classification of \psubiso\ is still unclear. It remains to be seen if we can prove the $\Wone$-hardness of \psubiso\ for other graph classes $\mathbf{C}$ with unbounded tree-width using $\kbiclique$ as a reduction source.

It is worth pointing out that our reduction creates a gap on one side of the biclique. Such a gap can be used to prove parameterized inapproximability results of other optimization problems~\cite{Chen2016TheCI} (for the definition of parameterized approximability, see ~\cite{cai2006fixed,downey2006parameterized,chen2006parameterized,marx2008parameterized}).

On the algorithmic side, it is of interest to investigate the trade-off between the running-time of  algorithms and the approximation ratios for \textsc{Maximum-$k$-Subset-Intersection}. More precisely, for any $t : \mathbb{N}\to\mathbb{N}$ and $r : \mathbb{N}\to\mathbb{N}$, we want to know if there exist algorithms that approximate \textsc{Maximum-$k$-Subset-Intersection}  to ratio $n^{r(k)}$ in  $f(k)\cdot n^{t(k)}$ time for some computable function $f$. Result of this paper rules out the existence of algorithms for $t(k)=o(\sqrt{k})$ and $r(k)=o(1/\sqrt{k})$ under the Exponential Time Hypothesis.




\noindent\textbf{Acknowledgement}
The author would like to thank  Yijia Chen, Hiroshi Imai and the anonymous reviewers  for their valuable comments and suggestions to improve the paper.


\bibliographystyle{alpha} 
\bibliography{ref}

\end{document}